\newtheorem{theorem}{Theorem}
\newtheorem{corollary}[theorem]{Corollary}
\newtheorem{lemma}{Lemma}
\newenvironment{proof}{\emph{Proof.}}{\hfill $\Box$ \medskip\\}
\newcommand{\etal}{\emph{et al.}\xspace}
\newcommand{\graph}{half-\ensuremath{\theta_6}-graph\xspace}
\renewcommand{\c}[1]{\ensuremath{c(#1)}}
\newcommand{\ci}[2]{\ensuremath{c_{#1}(#2)}}
\newcommand{\canon}[2]{\ensuremath{T_{#1 #2}}}
\newcommand{\degreeNine}{\ensuremath{G_9}\xspace}
\newcommand{\degreeSix}{\ensuremath{G_6}\xspace}
\newcommand{\Vis}{\mathord{\it Vis}}
\newcommand{\length}[1]{\ensuremath{|#1|}}
\title{On Plane Constrained Bounded-Degree Spanners\thanks{An extended abstract of this paper appeared in the proceedings of the 10th Latin American Symposium on Theoretical Informatics (LATIN 2012)~\cite{BFRV2012ConstrainedBounded}.}~\thanks{This work is supported in part by the Natural Science and Engineering Research Council of Canada, Carleton University's President's 2010 Doctoral Fellowship, the Ontario Ministry of Research and Innovation, and the Danish Council for Independent Research, Natural Sciences, grant DFF-1323-00247, and JST ERATO Grant Number JPMJER1201, Japan.}}
\author{Prosenjit Bose\thanks{Carleton University, Ottawa, Canada. \texttt{jit@scs.carleton.ca, sander@cg.scs.carleton.ca}} \and 
	Rolf Fagerberg\thanks{University of Southern Denmark, Odense, Demark. \texttt{rolf@imada.sdu.dk}} \and 
	Andr\'e van Renssen\thanks{University of Sydney, Sydney, Australia. \texttt{andre.vanrenssen@sydney.edu.au}} \and 
	Sander Verdonschot\footnotemark[3]}
\date{}
\begin{document}

\maketitle

\begin{abstract}
Let $P$ be a finite set of points in the plane and $S$ a set of non-crossing line segments with endpoints in $P$. The visibility graph of $P$ with respect to $S$, denoted $\Vis(P,S)$, has vertex set $P$ and an edge for each pair of vertices $u,v$ in $P$ for which no line segment of $S$ properly intersects $uv$. We show that the constrained \graph (which is identical to the constrained Delaunay graph whose empty visible region is an equilateral triangle) is a plane 2-spanner of $\Vis(P,S)$. We then show how to construct a plane 6-spanner of $\Vis(P,S)$ with maximum degree $6+c$, where $c$ is the maximum number of segments of $S$ incident to a vertex.
\end{abstract}

\section{Introduction}
A geometric graph $G$ is a graph whose vertices are points in the plane and whose edges are line segments between pairs of vertices. A graph $G$ is called plane if no two edges intersect properly. Every edge is weighted by the Euclidean distance between its endpoints. The distance between two vertices $u$ and $v$ in $G$, denoted by $d_G(u, v)$ or simply $d(u, v)$ when $G$ is clear from the context, is defined as the sum of the weights of the edges along the shortest path between $u$ and $v$ in $G$. A subgraph $H$ of $G$ is a $t$-spanner of $G$ (for $t\geq 1$) if for each pair of vertices $u$ and $v$, $d_H(u, v) \leq t \cdot d_G(u, v)$. The smallest value $t$ for which $H$ is a $t$-spanner is the {\em spanning ratio} or {\em stretch factor} of $H$. The graph $G$ is referred to as the {\em underlying graph} of $H$. The spanning properties of various geometric graphs have been studied extensively in the literature (see \cite{BS11,NS-GSN-06} for a comprehensive overview of the topic). However, most of the research has focused on constructing spanners where the underlying graph is the complete Euclidean geometric graph. We study this problem in a more general setting with the introduction of line segment {\em constraints}.

Specifically, let $P$ be a set of vertices in the plane and let $S$ be a set of line segments with endpoints in $P$, with no two line segments intersecting properly. The line segments of $S$ are called \emph{constraints}. Two vertices $u$ and $v$ can \textit{see each other} if and only if either the line segment $uv$ does not properly intersect any constraint or $u v$ is itself a constraint. If two vertices $u$ and $v$ can see each other, the line segment $uv$ is a \emph{visibility edge}. The \emph{visibility graph} of $P$ with respect to a set of constraints $S$, denoted $\Vis(P,S)$, has $P$ as vertex set and all visibility edges as edge set. In other words, it is the complete graph on $P$ minus all edges that properly intersect one or more constraints in~$S$.

This setting has been studied extensively within the context of motion planning amid obstacles. Clarkson \cite{Cl87} was one of the first to study this problem and showed how to construct a linear-sized $(1+\epsilon)$-spanner of $\Vis(P,S)$. Subsequently, Das \cite{D97} showed how to construct a spanner of $\Vis(P,S)$ with constant spanning ratio and constant degree. Bose and Keil \cite{BK06} showed that the Constrained Delaunay Triangulation is a 2.42-spanner of $\Vis(P,S)$. In this article, we show that the constrained \graph (which is identical to the constrained Delaunay graph whose empty visible region is an equilateral triangle) is a plane 2-spanner of $\Vis(P,S)$ by generalizing the approach used by Bose~\etal~\cite{BFRV15}. This improves the upper bound on the spanning ratio of 36 implied by Bose~\etal~\cite{BCR2016GeneralizedDelaunay}. A key difficulty in proving this result stems from the fact that the constrained Delaunay graph is {\bf not} necessarily a triangulation (see Figure~\ref{fig:NoTriangulation}). We then generalize the elegant construction of Bonichon~\etal~\cite{BGHP10} to show how to construct a plane 6-spanner of $\Vis(P,S)$ with maximum degree $6+c$, where $c=\max\{c(v)|v\in P\}$ and $c(v)$ is the number of constraints incident to a vertex~$v$.

\begin{figure}[ht]
  \begin{center}
    \includegraphics{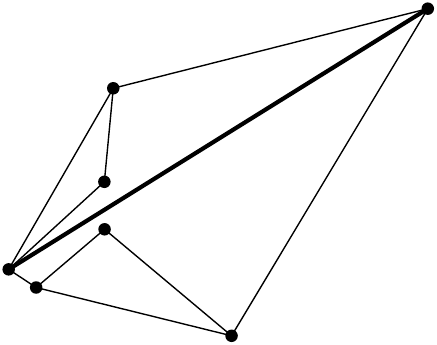}
  \end{center}
  \caption{The constrained \graph is not necessarily a triangulation. The thick line segment represents a constraint}
  \label{fig:NoTriangulation}
\end{figure}

\section{Preliminaries}
We define a \emph{cone} $C$ to be the region in the plane between two rays originating from a vertex referred to as the apex of the cone. We let six rays originate from each vertex, with angles to the positive $x$-axis being multiples of $\pi / 3$ (see Figure~\ref{fig:Cones}). Each pair of consecutive rays defines a cone. For ease of exposition, we only consider point sets in general position: no two vertices define a line parallel to one of the rays that define the cones and no three vertices are collinear. These assumptions imply that we can consider the cones to be open. If a point set is not in general position, one can easily find a suitable rotation of the point set to put it in general position. 

\begin{figure}[ht]
  \begin{minipage}[t]{0.47\linewidth}
    \begin{center}
      \includegraphics{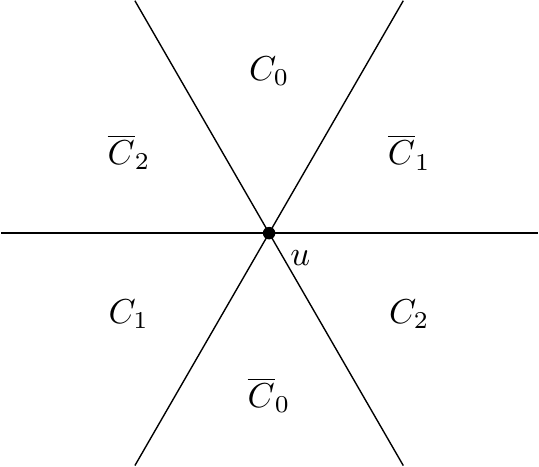}
    \end{center}
    \caption{The cones having apex $u$}
    \label{fig:Cones}
  \end{minipage}
  \hspace{0.05\linewidth}
  \begin{minipage}[t]{0.47\linewidth}
    \begin{center}
      \includegraphics{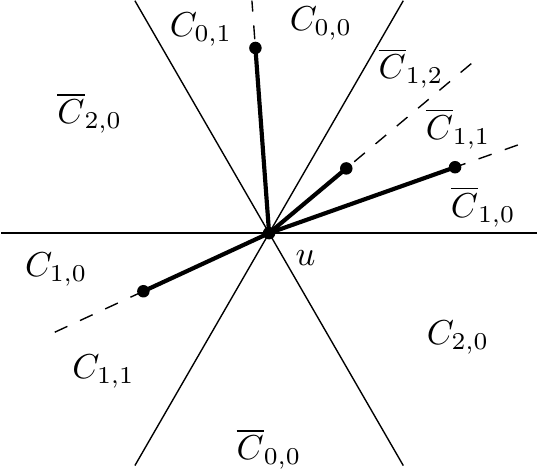}
    \end{center}
    \caption{The subcones having apex $u$. Constraints are shown as thick line segments}
    \label{fig:ConstrainedCones}
  \end{minipage}
\end{figure}

Let $(\overline{C}_1, C_0, \overline{C}_2, C_1, \overline{C}_0, C_2)$ be the sequence of cones in counterclockwise order starting from the positive $x$-axis. The cones $C_0$, $C_1$, and $C_2$ are called \emph{positive} cones and $\overline{C}_0$, $\overline{C}_1$, and $\overline{C}_2$ are called \emph{negative} cones. By using addition and subtraction modulo 3 on the indices, positive cone $C_i$ has negative cone $\overline{C}_{i+1}$ as clockwise next cone and negative cone $\overline{C}_{i-1}$ as counterclockwise next cone. A similar statement holds for negative cones. We use $C^u_i$ and $\overline{C}^u_j$ to denote cones $C_i$ and $\overline{C}_j$ with apex $u$. Note that for any two vertices $u$ and $v$, $v \in C^u_i$ if and only if $u \in \overline{C}^v_i$. 

Let vertex $u$ be an endpoint of a constraint $c$ and let the other endpoint $v$ lie in cone $C^u_i$. The lines through all such constraints $c$ split $C^u_i$ into several parts. We call these parts \emph{subcones} and denote the $j$-th subcone of $C^u_i$ by $C^u_{i, j}$, numbered in counterclockwise order (see Figure~\ref{fig:ConstrainedCones}). When a constraint $c = (u, v)$ splits a cone of $u$ into two subcones, we define $v$ to lie in both of these subcones. We call a subcone of a positive cone a positive subcone and a subcone of a negative cone a negative subcone. We consider a cone that is not split to be a single subcone.

We now introduce the constrained \graph, a generalized version of the \graph as described by Bonichon \etal\cite{BGHI10}: for each positive subcone of each vertex $u$, add an edge from $u$ to the closest vertex in or on the boundary of that subcone that can see $u$, where distance is measured along the bisector of the original cone (not the subcone) (see Figure~\ref{fig:Projection}). More formally, we add an edge between two vertices $u$ and $v$ if $v$ can see $u$, $v \in C^u_{i, j}$, and for all vertices $w \in C^u_{i, j}$ that can see $u$, $|u v'| \leq |u w'|$, where $v'$ and $w'$ denote the projection of $v$ and $w$ on the bisector of $C^u_i$ and $|xy|$ denotes the length of the line segment between two vertices $x$ and $y$. Note that our assumption of general position implies that each vertex adds at most one edge to the graph for each of its positive subcones. 

\begin{figure}[ht]
  \begin{minipage}[t]{0.47\linewidth}
    \begin{center}
      \includegraphics{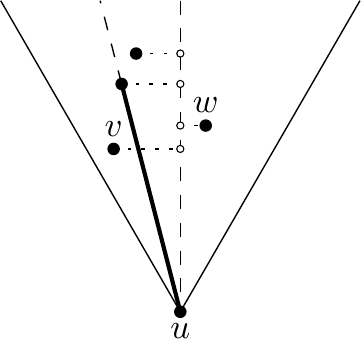}
    \end{center}
    \caption{Three vertices are projected onto the bisector of a cone of $u$. Vertex $v$ is the closest vertex in the left subcone and $w$ is the closest vertex in the right subcone}
    \label{fig:Projection}
  \end{minipage}
  \hspace{0.05\linewidth}
  \begin{minipage}[t]{0.47\linewidth}
    \begin{center}
      \includegraphics{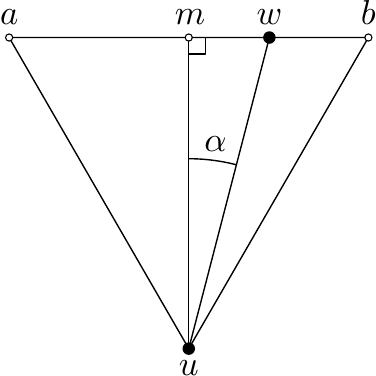}
    \end{center}
    \caption{Canonical triangle \canon{u}{w}}
    \label{fig:CanonicalTriangle1}
  \end{minipage}
\end{figure}

Given a vertex $w$ in a positive cone $C_i$ of vertex $u$, we define the \emph{canonical triangle} $\canon{u}{w}$ to be the triangle defined by the borders of $C^u_i$ and the line through $w$ perpendicular to the bisector of $C^u_i$ (see Figure~\ref{fig:CanonicalTriangle1}). Note that for each pair of vertices there exists a unique canonical triangle. We say that a region is \emph{empty} if it does not contain any vertices.

\section{Spanning Ratio of the Constrained Half-$\boldsymbol{\theta_6}$-Graph}
In this section we show that the constrained \graph is a plane 2-spanner of the visibility graph $\Vis(P,S)$. To do this, we first prove a property of visibility graphs. Recall that a region is \emph{empty} if it does not contain any vertices. 

\begin{lemma}
  \label{lem:ConvexChain}
  Let $u$, $v$, and $w$ be three arbitrary points in the plane such that $u w$ and $v w$ are visibility edges and $w$ is not the endpoint of a constraint intersecting the interior of triangle $u v w$. Then there exists a convex chain of visibility edges from $u$ to $v$ in triangle $u v w$, such that the polygon defined by $u w$, $w v$ and the convex chain is empty and does not contain any constraints.
\end{lemma}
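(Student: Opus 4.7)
My plan is to take the convex chain to be the ``lower'' boundary of the convex hull of $\{u,v\} \cup (P \cap \operatorname{int} T)$, where $T$ denotes the closed triangle $uvw$ and ``lower'' refers to the convex chain from $u$ to $v$ on the side of the hull facing $w$. By construction this is a convex polyline from $u$ to $v$ lying inside $T$, and every vertex of $P$ strictly inside $T$ lies either on the chain or on the side of it opposite to $w$; hence the polygonal region bounded by $uw$, $wv$, and the chain automatically contains no vertex of $P$ in its interior, dispatching one half of the emptiness requirement for free.

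What remains is to check (a) that each edge of the chain is a visibility edge and (b) that the bounded region contains no part of any constraint. Both reduce to the single claim that no constraint has a point strictly inside the bounded region, which I would prove by a short case analysis on the endpoints of an arbitrary constraint $c$ whose open segment meets $\operatorname{int} T$. The general-position assumption forbids any endpoint of a constraint from lying in the interior of an edge of $T$; the hypothesis of the lemma rules out $w$ itself; and the fact that $uw$ and $vw$ are visibility edges forbids $c$ from properly crossing either of them. Put together, these three facts force every endpoint of $c$ that lies in $T$ to belong to $\{u,v\} \cup (P \cap \operatorname{int} T)$, and if $c$ has an endpoint outside $T$, its segment must exit $T$ across $uv$. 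In every subcase the portion of $c$ lying inside $T$ has both of its extremal points in $\operatorname{conv}(\{u,v\} \cup (P \cap \operatorname{int} T))$, so by convexity this whole portion lies in that hull, i.e.\ on the side of the chain opposite to $w$. That is precisely (b); and (a) follows immediately, since a proper crossing of a chain edge by a constraint would require part of the constraint to lie on the $w$-side of the chain.

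The subcase I expect to be the most delicate is a constraint $c = (u,b)$ (or symmetrically $(v,b)$) with $b \notin T$ whose segment enters $\operatorname{int} T$. Such a $c$ would have to exit $T$ somewhere: exiting through $uw$ or $vw$ would be a proper crossing of a visibility edge, while exiting through $uv$ would force $b$ to lie on the line through $u$ and $v$, making $u$, $v$, $b$ three collinear vertices of $P$ and contradicting general position. Once this subcase and its mirror are settled, the remaining endpoint configurations of $c$---both endpoints in $\{u,v\} \cup (P \cap \operatorname{int} T)$, or one endpoint in $P \cap \operatorname{int} T$ with the other outside $T$, or both endpoints outside $T$ (ruled out because a segment crosses the line $uv$ at most once)---classify routinely.
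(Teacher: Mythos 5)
Your chain is constructed exactly as in the paper (the $w$-facing part of the convex hull of $\{u,v\}$ together with the vertices interior to the triangle, which immediately gives emptiness of vertices), but your verification of the visibility and constraint-freeness conditions takes a genuinely different route. The paper argues locally, edge by edge: for consecutive chain vertices $x,y$ it extends the line through $xy$ to points $u'$ on $uw$ and $v'$ on $vw$, and notes that a constraint crossing $xy$ would need an endpoint inside the empty triangle $u'wv'$, since it can neither properly cross the visibility edges $uw,vw$ nor have $w$ as an endpoint; a separate short argument then excludes constraints from the polygon. You instead prove a single global statement --- the intersection of any constraint with the triangle lies in the hull $H$, because both extremal points of that intersection land in $H$ and $H$ is convex --- and read off both conclusions at once. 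The hypotheses are used in the same places in both arguments, and your packaging is arguably cleaner. Two minor points. In your ``delicate subcase'' $c=(u,b)$ the appeal to general position is misplaced: the lemma is later applied with $u$, $v$, $w$ not in $P$ (the paper says so explicitly right after the proof), so ``three collinear vertices of $P$'' is not available; but you do not need it, since a segment issuing from $u$ that exits through $uv$ lies on the line $uv$ and hence never meets the interior of the triangle, making the subcase vacuous. Similarly, your claim that no constraint endpoint lies in the relative interior of a side of the triangle needs that side's endpoints to be in $P$; for the side $uv$ this is harmless (such a point is in $H$ anyway), and for $uw$, $vw$ the paper's own proof is equally silent about this degenerate configuration, so nothing is lost relative to it.
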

\begin{proof}
  Let $Q$ be the set of vertices of $\Vis(P,S)$ inside triangle $u v w$. If $Q$ is empty, no constraint can cross $u v$, since one of its endpoints would have to be inside $u v w$, so our convex chain is simply $u v$. Otherwise, we build the convex hull of $Q \cup \{u, v\}$. Note that $u v$ is part of the convex hull since $Q$ lies inside $u v w$ to one side of the line through $u v$. When we remove this edge, we get a convex chain from $u$ to $v$ in triangle $u v w$. By the definition of a convex hull, the polygon defined by $u w$, $w v$ and the convex chain is empty.

  \begin{figure}[ht]
    \begin{center}
      \includegraphics{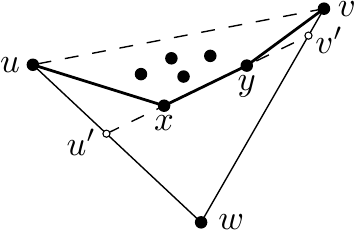}
    \end{center}
    \caption{A convex chain from $u$ to $v$ and intersections $u'$ and $v'$ of the triangle and the line through $x$ and $y$}
    \label{fig:VisiblePointInsideTriangle}
  \end{figure}

  Next, we show that two consecutive vertices $x$ and $y$ along the convex chain can see each other. Let $u'$ be the intersection of $u w$ and the line through $x$ and $y$ and let $v'$ be the intersection of $v w$ and the line through $x$ and $y$ (see Figure~\ref{fig:VisiblePointInsideTriangle}). Since $w$ is not the endpoint of a constraint intersecting the interior of triangle $u v w$ and, by construction, both $u'$ and $v'$ can see $w$, any constraint crossing $x y$ would need to have an endpoint inside $u' w v'$. But the polygon defined by $u w$, $w v$ and the convex chain is empty, so this is not possible. Therefore $x$ can see $y$.

  Finally, since the polygon defined by $u w$, $w v$ and the convex chain is empty and consists of visibility edges, any constraint intersecting its interior needs to have $w$ as an endpoint, which is not allowed. Hence, the polygon does not contain any constraints. 
\end{proof}

In the proof of Lemma~\ref{lem:ConvexChain}, note that $u$, $v$, and $w$ actually need not be part of the point set $P$. The lemma holds for any three points in the plane satisfying the requirements, if one considers the visibility edge as a line segment between any two points in the plane which is not intersected by a constraint. Lemma~\ref{lem:ConvexChain} will sometimes be used with this interpretation in mind later in the paper. Using this lemma, we proceed to improve the upper bound on the spanning ratio of the constrained \graph implied by Bose~\etal~\cite{BCR2016GeneralizedDelaunay}. 

\begin{theorem}
  \label{theo:ConstrainedSpanningRatio}
  Let $u$ and $w$ be vertices, with $w$ in a positive cone of $u$, such that $u w$ is a visibility edge. Let $m$ be the midpoint of the side of \canon{u}{w} opposing $u$, and let $\alpha$ be the unsigned angle between the lines $uw$ and $um$. There exists a path connecting $u$ and $w$ in the constrained \graph of length at most $(\sqrt{3} \cdot \cos \alpha + \sin \alpha) \cdot |u w|$ that lies inside \canon{u}{w}.
\end{theorem}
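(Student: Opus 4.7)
I argue by induction on the number of vertices of $P$ that lie strictly inside the canonical triangle $\canon{u}{w}$.

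\emph{Base case.} If no vertex of $P$ lies strictly inside $\canon{u}{w}$, then the closest vertex to $u$ in the subcone of $C_i^u$ containing $w$ is $w$ itself, so $uw$ is an edge of the constrained \graph. The single-edge path satisfies the bound because $\sqrt{3}\cos\alpha+\sin\alpha = 2\cos(\alpha-\pi/6)\geq 1$ for $|\alpha|\leq\pi/6$.

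\emph{Inductive step.} Let $v$ be the vertex to which $u$ is joined in the subcone of $C_i^u$ containing $w$; if $v=w$ the base argument applies, so assume $v\neq w$. Then $v$ lies strictly inside $\canon{u}{w}$ and $uv$ is a visibility edge. To extend the path from $v$ to $w$, I invoke Lemma~\ref{lem:ConvexChain} with $u$ playing the role of the apex $w$ of that lemma. Two of its hypotheses (the visibility edges $uv$ and $uw$) are immediate. For the third, I use that $v$ and $w$ lie in the same subcone: by definition of subcones, the supporting line of any constraint incident to $u$ whose other endpoint lies in $C_i^u$ leaves $v$ and $w$ on the same side, so the constraint segment cannot enter the interior of triangle $uvw$; constraints incident to $u$ with other endpoint outside $C_i^u$ stay even farther away. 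The lemma hence yields a convex chain $v = x_0, x_1, \ldots, x_k = w$ of visibility edges inside triangle $uvw \subseteq \canon{u}{w}$, bounding an empty and constraint-free polygon. To each chain edge $x_jx_{j+1}$, I apply the inductive hypothesis — choosing as apex the endpoint whose positive cone contains the other — getting a sub-path of length at most $(\sqrt{3}\cos\alpha_j+\sin\alpha_j)|x_jx_{j+1}|$ for the local analogue $\alpha_j$ of $\alpha$. Each such sub-canonical-triangle contains strictly fewer vertices of $P$ in its interior, because the polygon bounded by the chain and $uv,uw$ is empty.

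\emph{Main obstacle.} The technical core is twofold. First, the resulting path must stay inside $\canon{u}{w}$, which requires that the sub-canonical-triangle of each chain edge fit inside $\canon{u}{w}$. Since the chain sits in triangle $uvw$ and cones at $x_j$ are parallel translates of cones at $u$, a short case analysis on the orientation of each chain edge relative to $C_i^u$ should force containment. Second, and harder, I must show that
\[
|uv| + \sum_{j=0}^{k-1}(\sqrt{3}\cos\alpha_j + \sin\alpha_j)\,|x_jx_{j+1}|
\;\leq\; (\sqrt{3}\cos\alpha + \sin\alpha)\,|uw| \;=\; \sqrt{3}\,|um| + |wm|.
\]
My plan is to project every recursive contribution onto the bisector of $C_i^u$ and its perpendicular, exploiting the convexity of the chain — so that the unit vectors along consecutive edges rotate monotonically — to make the projected sums telescope to $\sqrt{3}\,|um|$ along the bisector and $|wm|$ perpendicular to it, with $|uv|$ absorbed into the bisector side. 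The fiddliest part is tracking how each local angle $\alpha_j$ is oriented when its chain edge lies in a positive versus a negative cone of its chosen apex, and checking that the telescoping survives these sign changes.
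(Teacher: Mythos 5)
Your overall architecture (base case, recurse on the closest visible vertex $v$ in the subcone, apply Lemma~\ref{lem:ConvexChain} to triangle $uvw$, then recurse on each edge of the convex chain) matches the paper's proof. But there is a genuine gap at the step you flag as the "main obstacle," and it is not merely a computation left to do: the theorem statement is too weak to serve as its own inductive hypothesis, so the telescoping you plan cannot go through. The bound $(\sqrt{3}\cos\alpha_j+\sin\alpha_j)\,|x_jx_{j+1}|$ is exactly $\max\{|x_ja_j|+|a_jx_{j+1}|,\;|x_jb_j|+|b_jx_{j+1}|\}$, the \emph{longer} of the two detours around the corners $a_j,b_j$ of $\canon{x_j}{x_{j+1}}$ (or $\canon{x_{j+1}}{x_j}$). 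The telescoping identity that makes the sum collapse to $|ua|+|aw|$ (equivalently $\sqrt{3}|um|+|wm|$) requires, for each chain edge, the detour around the specific corner lying toward the boundary of $\canon{u}{w}$ away from $uw$ — and this identity is tight, with no slack to absorb errors. For a chain edge with $x_j$ in the cone $C_1$ of $x_{j+1}$ (the paper's Type~(i) configuration), $x_j$ may well lie closer to that "correct" corner, in which case the max you get from the unrefined hypothesis is the detour around the \emph{other} corner, which is strictly longer and points back toward $uw$; substituting it into the tight telescoping sum pushes the total above $\sqrt{3}\cos\alpha+\sin\alpha$ times $|uw|$ whenever $w$ sits near the corner of $\canon{u}{w}$ on the chain's far side. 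No projection/convexity argument can rescue this, because the per-edge upper bounds themselves are too large.

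The paper's fix is to prove a stronger three-case inductive hypothesis: writing $A$ and $B$ for the two halves of $\canon{u}{w}$ cut off by $uw$, it asserts $\delta(u,w)\le|ub|+|bw|$ if $A$ is empty, $\delta(u,w)\le|ua|+|aw|$ if $B$ is empty, and the max only when neither is empty. Lemma~\ref{lem:ConvexChain} guarantees that for each chain edge the half-triangle between the chain and $uw$ is empty, so the strengthened hypothesis always hands you the detour on the correct side, and the sum telescopes exactly. You will need to restate and prove this stronger claim (the paper inducts on the area rank of canonical triangles of mutually visible pairs, which also sidesteps the question of whether your vertex-count measure strictly decreases when a sub-canonical-triangle protrudes outside $\canon{u}{w}$), and then handle separately the case where the chain contains a Type~(iii) edge ($x_{j+1}$ left of $x_j$), where the path must switch from the left-corner detours to the right-corner detours and one compares $|ua'|+|a'v_j|$ with $|ub'|+|b'v_j|$ using $\angle awu>\pi/2$.
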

\begin{proof} 
  We assume without loss of generality that $w \in C^u_{0, j}$. We prove the theorem by induction on the area of \canon{u}{w}. Formally, we perform induction on the rank, when ordered by area, of the triangles \canon{x}{y} for all pairs of vertices $x$ and $y$ that can see each other. Let $\delta(x, y)$ denote the length of the shortest path from $x$ to $y$ in the constrained \graph that lies inside \canon{x}{y}. Let $a$ and $b$ be the upper left and right corner of \canon{u}{w}, and let $A$ and $B$ be the triangles $u a w$ and $u b w$ (see Figure~\ref{fig:SpanningTriangle}). Our inductive hypothesis is the following: 
  \begin{itemize}
    \item If $A$ is empty, then $\delta(u, w) \leq |ub| + |bw|$. 
    \item If $B$ is empty, then $\delta(u, w) \leq |ua| + |aw|$. 
    \item If neither $A$ nor $B$ is empty, then $\delta(u, w) \leq \max\{|ua| + |aw|, |ub| + |bw|\}$. 
  \end{itemize}

  We first note that this induction hypothesis implies the theorem: using the side of \canon{u}{w} as the unit of length, we have that $\delta(u, w) \leq (\sqrt{3} \cdot \cos\alpha + \sin\alpha) \cdot \length{uw}$ (see Figure~\ref{fig:CanonicalTriangle}).

  \begin{figure}[ht]
    \begin{minipage}[t]{0.31\linewidth}
      \begin{center}
	\includegraphics{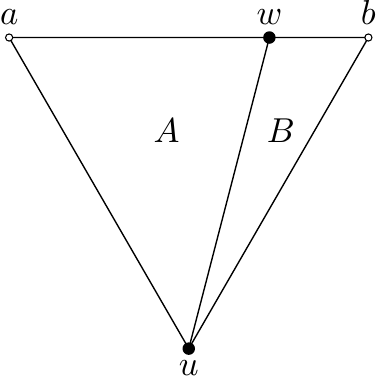}
      \end{center}
      \caption{Triangles $A$ and $B$}
      \label{fig:SpanningTriangle}
    \end{minipage}
    \hspace{0.01\linewidth}
    \begin{minipage}[t]{0.31\linewidth}
      \begin{center}
	\includegraphics{CanonicalTriangle}
      \end{center}
      \caption{Canonical triangle \canon{u}{w}}
      \label{fig:CanonicalTriangle}
    \end{minipage}
    \hspace{0.01\linewidth}
    \begin{minipage}[t]{0.31\linewidth}
      \begin{center}
	\includegraphics{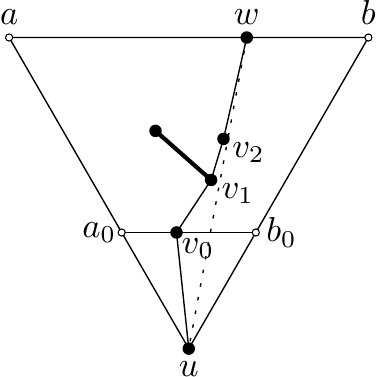}
      \end{center}
      \caption{Convex chain from $v_0$ to $w$}
      \label{fig:ConvexChain}
    \end{minipage}
  \end{figure}

  \textbf{Base case:} Triangle $\canon{u}{w}$ has minimal area. Since the triangle is a smallest canonical triangle, $w$ is the closest vertex to $u$ in its positive subcone. Hence the edge $u w$ is in the constrained \graph, and $\delta(u, w) = |u w|$. From the triangle inequality, we have that $|u w| \leq \min\{|u a| + |a w|, |u b| + |b w|\}$, so the induction hypothesis holds.

  \textbf{Induction step:} We assume that the induction hypothesis holds for all pairs of vertices that can see each other and have a canonical triangle whose area is smaller than the area of $\canon{u}{w}$. If $u w$ is an edge in the constrained \graph, the induction hypothesis follows by the same argument as in the base case. If there is no edge between $u$ and $w$, let $v_0$ be the visible vertex closest to $u$ in the positive subcone containing $w$, and let $a_0$ and $b_0$ be the upper left and right corner of $\canon{u}{v_0}$ (see Figure~\ref{fig:ConvexChain}). By definition, $\delta(u, w) \leq |u v_0| + \delta(v_0, w)$, and by the triangle inequality, $|u v_0| \leq \min\{|u a_0| + |a_0 v_0|, |u b_0| + |b_0 v_0|\}$. We assume without loss of generality that $v_0$ lies to the left of $u w$, which means that $A$ is not empty. 

  Since $u w$ and $u v_0$ are visibility edges, by applying Lemma~\ref{lem:ConvexChain} to triangle $v_0 u w$, a convex chain $v_0, ..., v_k = w$ of visibility edges connecting $v_0$ and $w$ exists (see Figure~\ref{fig:ConvexChain}). Note that, since $v_0$ is the closest visible vertex to $u$, every vertex along the convex chain lies above the horizontal line through $v_0$. 

  When looking at two consecutive vertices $v_{i-1}$ and $v_i$ along the convex chain, there are three types of configurations: (i) $v_{i-1} \in C^{v_i}_1$, (ii) $v_i \in C^{v_{i-1}}_0$ and $v_i$ lies to the right of or has the same $x$-coordinate as $v_{i-1}$, (iii) $v_i \in C^{v_{i-1}}_0$ and $v_i$ lies to the left of $v_{i-1}$. Let $A_i = v_{i-1} a_i v_i$ and $B_i = v_{i-1} b_i v_i$, the vertices $a_i$ and $b_i$ will be defined for each case. By convexity, the direction of $\overrightarrow{v_i v_{i+1}}$ is rotating counterclockwise for increasing $i$. Thus, these configurations occur in the order Type~(i), Type~(ii), and Type~(iii) along the convex chain from $v_0$ to $w$. We bound $\delta(v_{i-1}, v_i)$ as follows (see Figure~\ref{fig:ConvexChainCases}):

  \textbf{Type~(i):} If $v_{i-1} \in C^{v_i}_1$, let $a_i$ and $b_i$ be the upper left and lower corner of \canon{v_i}{v_{i-1}}. Triangle $B_i$ lies between the convex chain and $u w$, so it must be empty by Lemma~\ref{lem:ConvexChain}. Since $v_i$ can see $v_{i-1}$ and \canon{v_i}{v_{i-1}} has smaller area than \canon{u}{w}, the induction hypothesis gives that $\delta(v_{i-1}, v_i)$ is at most $|v_{i-1} a_i| + |a_i v_i|$.

  \begin{figure}[ht]
    \begin{center}
      \includegraphics{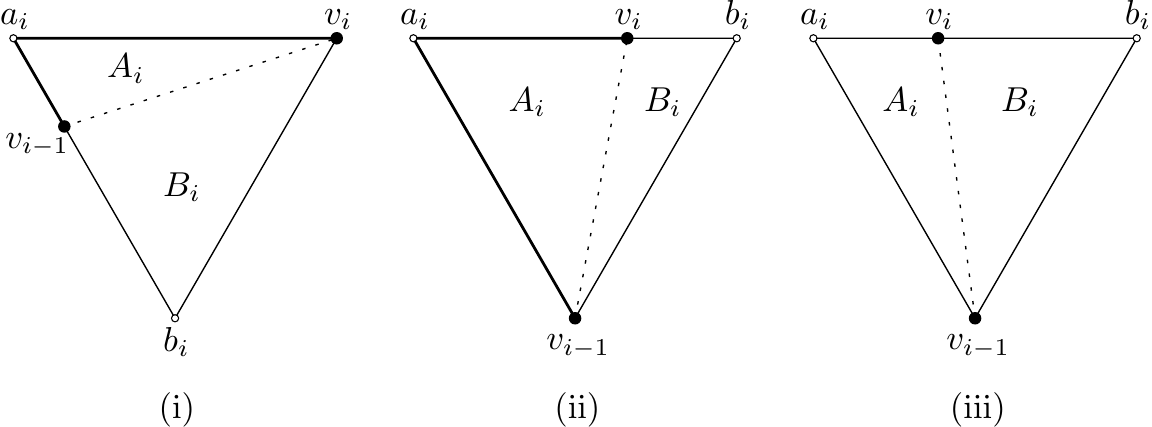}
    \end{center}
    \caption{Charging the three types of configurations}
    \label{fig:ConvexChainCases}
  \end{figure} 

  \textbf{Type~(ii):} If $v_i \in C^{v_{i-1}}_0$, let $a_i$ and $b_i$ be the left and right corner of \canon{v_{i-1}}{v_i}. Since $v_i$ can see $v_{i-1}$ and \canon{v_{i-1}}{v_i} has smaller area than \canon{u}{w}, the induction hypothesis applies. Whether $A_i$ and $B_i$ are empty or not, $\delta(v_{i-1}, v_i)$ is at most $\max\{|v_{i-1} a_i| + |a_i v_i|, |v_{i-1} b_i| + |b_i v_i|\}$. Since $v_i$ lies to the right of or has the same $x$-coordinate as $v_{i-1}$, we know $|v_{i-1} a_i| + |a_i v_i| \geq |v_{i-1} b_i| + |b_i v_i|$, so $\delta(v_{i-1}, v_i)$ is at most $|v_{i-1} a_i| + |a_i v_i|$.

  \textbf{Type~(iii):} If $v_i \in C^{v_{i-1}}_0$ and $v_i$ lies to the left of $v_{i-1}$, let $a_i$ and $b_i$ be the left and right corner of \canon{v_{i-1}}{v_i}. Since $v_i$ can see $v_{i-1}$ and \canon{v_{i-1}}{v_i} has smaller area than \canon{u}{w}, we can apply the induction hypothesis. Thus, if $B_i$ is empty, $\delta(v_{i-1}, v_i)$ is at most $|v_{i-1} a_i| + |a_i v_i|$ and if $B_i$ is not empty, $\delta(v_{i-1}, v_i)$ is at most $|v_{i-1} b_i| + |b_i v_i|$. 

  Recall that $a$ and $b$ are the upper left and right corner of \canon{u}{w} and that $B$ is the triangle $ubw$ (see Figure~\ref{fig:SpanningTriangle}). To complete the proof, we consider three cases: \mbox{(a) $\angle a w u \leq \pi/2$,} (b) $\angle a w u > \pi/2$ and $B$ is empty, (c) $\angle a w u > \pi/2$ and $B$ is not empty. 

  \textbf{Case (a):} If $\angle a w u \leq \pi/2$, the convex chain cannot contain any Type~(iii) configurations: for Type~(iii) configurations to occur, $v_i$ needs to lie to the left of $v_{i-1}$. However, by construction, $v_i$ lies to the right of the line through $v_{i-1}$ and $w$. Hence, since $\angle a w v_{i-1} < \angle a w u \leq \pi/2$, $v_i$ lies to the right of $v_{i-1}$. We can now bound $\delta(u, w)$ as follows using the bounds on Type~(i) and Type~(ii) configurations outlined above (see Figure~\ref{fig:SpanningProofCase1}):
  \begin{eqnarray*}
     \delta(u, w) &\leq& |u v_0| + \sum_{i=1}^k \delta(v_{i-1}, v_i) \\[-1ex]
		  &\leq& |u a_0| + |a_0 v_0| + \sum_{i=1}^k (|v_{i-1} a_i| + |a_i v_i|) \\
		  &=& |u a| + |a w|
  \end{eqnarray*}
  We see that the latter is equal to $|u a| + |a w|$ as required.

  \begin{figure}[ht]
    \begin{center}
      \includegraphics{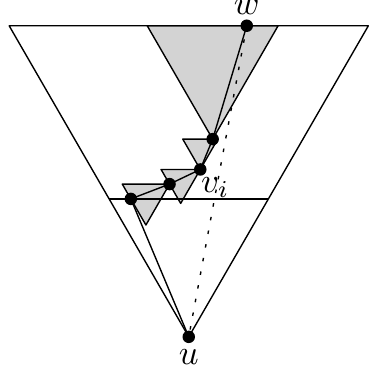}
      \includegraphics{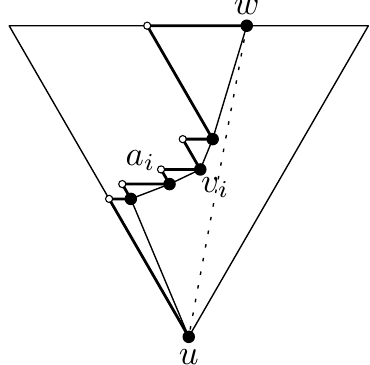}
      \includegraphics{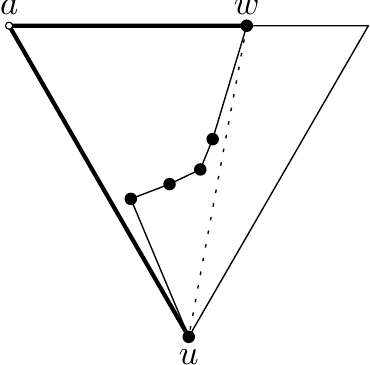}
    \end{center}
    \caption{Visualization of the paths (thick lines) in the inequalities of case (a)}
    \label{fig:SpanningProofCase1}
  \end{figure}

  \textbf{Case (b):} If $\angle a w u > \pi/2$ and $B$ is empty, the convex chain can contain Type~(iii) configurations. However, since $B$ is empty and the area between the convex chain and $u w$ is empty (by Lemma~\ref{lem:ConvexChain}), all triangles $B_i$ are also empty. Hence using the induction hypothesis, $\delta(v_{i-1}, v_i)$ is at most $|v_{i-1} a_i| + |a_i v_i|$ for all $i$. Using these bounds on the lengths of the paths between the vertices along the convex chain, we can bound $\delta(u, w)$ as in the previous case. Therefore, $\delta(u, w) \leq |u a| + |a w|$ as required. 

  \textbf{Case (c):} If $\angle a w u > \pi/2$ and $B$ is not empty, the convex chain can contain Type~(iii) configurations. Since $B$ is not empty, the triangles $B_i$ need not be empty. Recall that $v_0$ lies in $A$, hence neither $A$ nor $B$ are empty. Therefore, it suffices to prove that $\delta(u, w) \leq \max\{|ua| + |aw|, |ub| + |bw|\} = |ub| + |bw|$. Let \canon{v_j}{v_{j+1}} be the first Type~(iii) configuration along the convex chain (if it has any), let $a'$ and $b'$ be the upper left and right corner of \canon{u}{v_j}, and let $b''$ be the upper right corner of \canon{v_j}{w} (see Figure~\ref{fig:SpanningProofCase3}). Note that since $\angle a w u > \pi/2$ and $v_j$ lies to the left of $u w$, $|a' v_j|$ is smaller than $|b' v_j|$.
  \begin{eqnarray*}
    \delta(u, w) &\leq& |u v_0| + \sum_{i=1}^k \delta(v_{i-1}, v_i) \\[-1ex]
    &\leq& |u a_0| + |a_0 v_0| + \sum_{i=1}^j (|v_{i-1} a_i| + |a_i v_i|) + \sum_{i=j+1}^k (|v_{i-1} b_i| + |b_i v_i|) \\
    &=& |u a'| + |a' v_j| + |v_j b''| + |b'' w| \\
    &\leq& |u b'| + |b' v_j| + |v_j b''| + |b'' w| \\
    &=& |u b| + |b w| 
  \end{eqnarray*}

  \begin{figure}[ht]
    \begin{center}
      \includegraphics{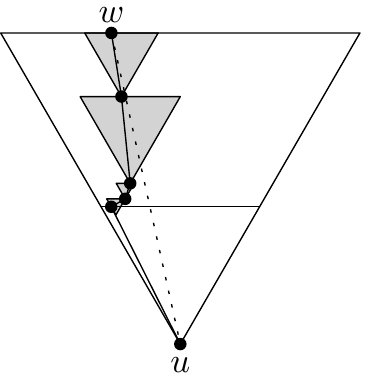}
      \includegraphics{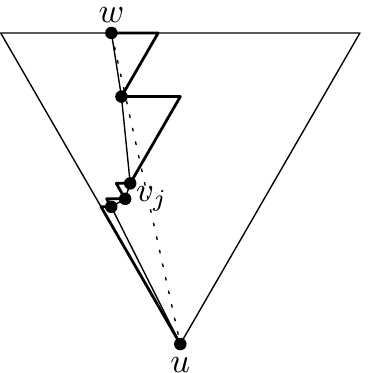}
      \includegraphics{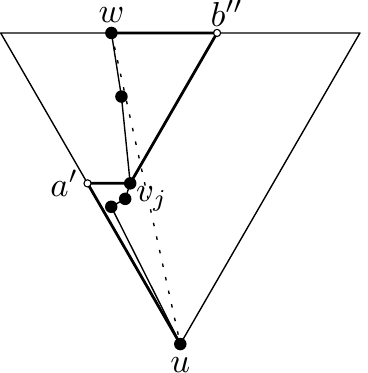}
      \includegraphics{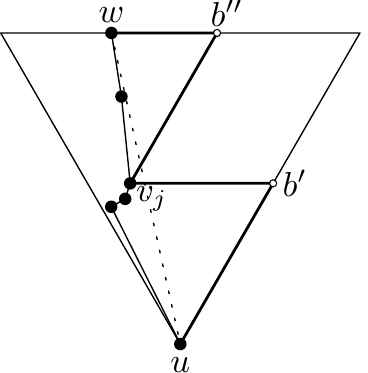}
      \includegraphics{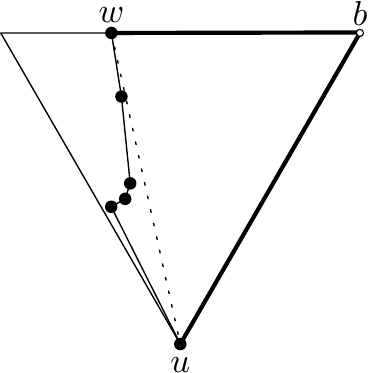}
    \end{center}
    \caption{Visualization of the paths (thick lines) in the inequalities of case (c)}
    \label{fig:SpanningProofCase3}
  \end{figure} \vspace{-2.3em} 
\end{proof}

Since the expression $\sqrt{3} \cdot \cos \alpha + \sin \alpha$ is increasing for $\alpha \in [0, \pi/6]$, the maximum value is attained by inserting the extreme value $\pi/6$. This leads to the following corollary. 

\begin{corollary}
  The constrained \graph is a 2-spanner of the visibility graph.
\end{corollary}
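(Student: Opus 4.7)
The plan is to derive the corollary as an immediate consequence of Theorem~\ref{theo:ConstrainedSpanningRatio} combined with the elementary monotonicity observation stated just before. First, I would verify that the angle $\alpha$ in Theorem~\ref{theo:ConstrainedSpanningRatio} always satisfies $\alpha \in [0, \pi/6]$: since $w$ lies in the positive cone $C^u_i$, which has opening angle $\pi/3$ and bisector through $m$, the unsigned angle between the lines $uw$ and $um$ is at most half that opening angle, namely $\pi/6$. Because the function $f(\alpha) = \sqrt{3}\cos\alpha + \sin\alpha$ is increasing on $[0, \pi/6]$, its maximum on this interval is attained at $\alpha = \pi/6$, where $f(\pi/6) = \sqrt{3} \cdot \tfrac{\sqrt{3}}{2} + \tfrac{1}{2} = 2$. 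Hence for any visibility edge $uw$ with $w$ in a positive cone of $u$, there is a path in the constrained \graph of length at most $2 \cdot |uw|$.

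Second, I would remove the restriction that $w$ lie in a positive cone of $u$ by a symmetry argument: if instead $w$ lies in a negative cone $\overline{C}^u_i$, then $u$ lies in the positive cone $C^w_i$, so Theorem~\ref{theo:ConstrainedSpanningRatio} applies with the roles of $u$ and $w$ swapped and still yields a path of length at most $2 \cdot |uw|$. Thus every visibility edge is $2$-spanned by a path in the constrained \graph.

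Finally, for arbitrary $p, q \in P$, I would consider a shortest path $p = x_0, x_1, \ldots, x_k = q$ in $\Vis(P, S)$, replace each visibility edge $x_{i-1} x_i$ by the path of length at most $2|x_{i-1} x_i|$ guaranteed above, and concatenate to obtain a walk from $p$ to $q$ in the constrained \graph of total length at most $2 \cdot d_{\Vis(P,S)}(p, q)$. There is no substantive obstacle here — Theorem~\ref{theo:ConstrainedSpanningRatio} does all the heavy lifting — so the corollary reduces to a trigonometric evaluation at the extremal angle, a cone-symmetry swap between $C_i$ and $\overline{C}_i$, and the standard reduction from an edge-wise spanning bound to a pairwise one via concatenation along a shortest visibility-graph path.
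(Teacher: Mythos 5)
Your proposal is correct and matches the paper's argument: the paper likewise obtains the corollary by noting that $\sqrt{3}\cos\alpha+\sin\alpha$ is increasing on $[0,\pi/6]$ and evaluates to $2$ at $\alpha=\pi/6$, leaving the cone-symmetry swap and the concatenation along a shortest visibility-graph path implicit. You have simply spelled out those routine final steps explicitly.
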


\noindent Next, we prove that the constrained \graph is plane. 

\begin{lemma}
  \label{lem:ContainsCorner}
  Let $u$, $v$, $x$, and $y$ be four distinct vertices such that the two canonical triangles \canon{u}{v} and \canon{x}{y} intersect. Then at least one of the corners of one canonical triangle is contained in the other canonical triangle. 
\end{lemma}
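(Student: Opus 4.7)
The plan is to observe that every canonical triangle is a congruent equilateral triangle in the same orientation, so each is the intersection of three half-planes whose outward unit normals do not depend on the triangle. This reduces the lemma to a short linear-inequality argument.

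Concretely, since the three positive cones $C_0, C_1, C_2$ have bisectors differing by multiples of $2\pi/3$, a direct computation shows the three sides of any canonical triangle are parallel to three fixed lines (at angles $0, \pi/3, 2\pi/3$ to the $x$-axis). Thus every canonical triangle has the form $\{p : \langle p, n_i \rangle \le c_i,\ i = 1, 2, 3\}$ for the same three unit outward normals $n_1, n_2, n_3$, and (being $2\pi/3$ apart) they satisfy $n_1 + n_2 + n_3 = 0$. Writing $\canon{u}{v} = \{p : \langle p, n_i \rangle \le a_i\}$ and $\canon{x}{y} = \{p : \langle p, n_i \rangle \le b_i\}$, summing the three defining inequalities at any point shows that a system $\langle p, n_i \rangle \le c_i$ is feasible iff $c_1 + c_2 + c_3 \ge 0$. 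Applied with $c_i = \min(a_i, b_i)$, the hypothesis that $\canon{u}{v}$ and $\canon{x}{y}$ intersect becomes
\[
   \min(a_1, b_1) + \min(a_2, b_2) + \min(a_3, b_3) \ge 0. \qquad (\star)
\]

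Next I would case-split on $S = \{i : a_i \le b_i\}$. If $|S| \in \{0, 3\}$, one triangle is contained in the other and all three corners of the smaller lie in the larger. The cases $|S| = 1$ and $|S| = 2$ are symmetric under swapping the two triangles, so it suffices to treat $|S| = 2$, say $a_1 \le b_1$, $a_2 \le b_2$, $b_3 < a_3$. The corner $p$ of $\canon{u}{v}$ opposite the side with outward normal $n_3$ is determined by $\langle p, n_1 \rangle = a_1$ and $\langle p, n_2 \rangle = a_2$; using $n_3 = -n_1 - n_2$ this also forces $\langle p, n_3 \rangle = -a_1 - a_2$. Then $p \in \canon{x}{y}$ reduces to $-a_1 - a_2 \le b_3$, which is exactly $(\star)$ in this case.

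The only conceptual step is the first observation that all canonical triangles share a single orientation and hence a common triple of outward normals; after that, the rest is bookkeeping. The main thing to watch out for is the orientation computation itself, since the apex $u$ of the cone sits at different corners of the canonical triangle depending on which positive cone contains the other endpoint, so one must check that $C_0$, $C_1$, $C_2$ really produce triangles of the same (not reflected) orientation.
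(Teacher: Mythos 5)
Your proof is correct, and it reaches the conclusion by a genuinely different (more algebraic) route than the paper, though both rest on the same key observation that all canonical triangles are homothetic. The paper phrases this as "corresponding boundaries are parallel" and then does a positional case analysis: assuming WLOG that the upper boundary of \canon{u}{v} lies below that of \canon{x}{y}, either an upper corner of \canon{u}{v} already lies in \canon{x}{y}, or the upper boundary of \canon{u}{v} crosses both slanted boundaries of \canon{x}{y} and then the lower corner of \canon{x}{y} lies in \canon{u}{v}. You instead encode each triangle as $\{p : \langle p, n_i\rangle \le c_i\}$ with a common normal triple summing to zero, convert the intersection hypothesis into the inequality $(\star)$, and read off the witnessing corner from the index set $S$; your cases $|S|\in\{0,3\}$, $|S|=2$, $|S|=1$ correspond to the paper's containment case and its WLOG symmetry. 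Your version is more systematic and replaces the paper's somewhat informal "otherwise the triangles do not intersect" steps with a single linear identity, at the cost of the orientation check you rightly flag; that check does go through, since the three positive cones are rotations of one another by $2\pi/3$ and such a rotation maps the class of equilateral triangles with sides in the three fixed directions and apex opposite the far side to itself, so no reflection occurs. Two minor points: summing the three inequalities only yields the implication "nonempty intersection $\Rightarrow c_1+c_2+c_3 \ge 0$", which is the direction you actually use (the converse requires the corner construction you give afterwards); and in the case $|S|=2$ you should note, as you implicitly do, that the point $p$ with $\langle p,n_1\rangle=a_1$ and $\langle p,n_2\rangle=a_2$ really is a corner of \canon{u}{v}, which holds because a nondegenerate triangle of this form is exactly the intersection of its three supporting half-planes.
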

\begin{proof}
  If one triangle contains the other triangle, it contains all of its corners. Therefore we focus on the case where neither triangle contains the other. 

  By definition, the upper boundaries of \canon{u}{v} and \canon{x}{y} are parallel, the left boundaries of \canon{u}{v} and \canon{x}{y} are parallel, and the right boundaries of \canon{u}{v} and \canon{x}{y} are parallel. Because we assume that no two vertices define a line parallel to one of the rays that define the cones, we assume, without loss of generality, that the upper boundary of \canon{u}{v} lies below the upper boundary of \canon{x}{y}. The upper boundary of \canon{u}{v} must lie above the lower corner of \canon{x}{y}, since otherwise the triangles do not intersect. If the upper left (right) corner of \canon{u}{v} lies to the right (left) of the right (left) boundary of \canon{x}{y}, the triangles cannot intersect. Hence, either one of the upper corners of \canon{u}{v} is contained in \canon{x}{y} or the upper boundary of \canon{u}{v} intersects both the left and right boundary of \canon{x}{y}. In the latter case, the fact that the left boundaries of \canon{u}{v} and \canon{x}{y} are parallel and the right boundaries of \canon{u}{v} and \canon{x}{y} are parallel,  implies that the lower corner of \canon{x}{y} is contained in \canon{u}{v}. 
\end{proof}

\begin{lemma}
  \label{lem:Plane}
  The constrained \graph is plane.
\end{lemma}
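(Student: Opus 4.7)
The plan is to suppose, towards contradiction, that two edges $uv$ and $xy$ of the constrained \graph properly cross, and to derive a contradiction from the closest-visible-vertex property defining the graph. Each edge lies inside its canonical triangle with its ``source'' endpoint (the apex, from whose positive subcone the edge was added) at the bottom, so a proper crossing forces $T_{uv}\cap T_{xy}\neq\emptyset$.

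Since Lemma~\ref{lem:ContainsCorner} relies on the two canonical triangles having pairwise parallel corresponding boundaries, I would first argue that $T_{uv}$ and $T_{xy}$ may be assumed to come from positive cones of the same orientation, i.e.\ $v\in C^u_i$ and $y\in C^x_i$ for the same $i$; the other cases can be dispensed with by a direct geometric argument using that each edge lies strictly in its own positive cone. So without loss of generality $v\in C^u_0$ and $y\in C^x_0$, and by general position $v_y<y_y$, making $T_{uv}$ the ``lower'' triangle.

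Let $p=uv\cap xy$, so $p_y<v_y$. Then $v_x-x_x=(p_y-x_y)s_{xy}+(v_y-p_y)s_{uv}$, where $s_{uv},s_{xy}$ are the $dx/dy$ slopes of $uv,xy$, each bounded in absolute value by $1/\sqrt 3$. The triangle inequality gives $|v_x-x_x|\leq(v_y-x_y)/\sqrt 3$, strictly because the edges are not parallel. Together with $v_y<y_y$, this places $v$ strictly inside $T_{xy}$; in particular, $v\in C^x_0$ and the projection of $v$ on the bisector of $C^x_0$ is strictly smaller than that of $y$. If $v$ is visible from $x$ and lies in the same positive subcone of $x$ as $y$, this contradicts $y$ being the chosen closest visible vertex in that subcone.

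The main obstacle is handling the constraints. If a constraint $c'=xw'$ at $x$ separates $v$ and $y$ into different subcones of $C^x_0$, then $w'$ lies on the boundary between them, hence also in $y$'s subcone; if $w'_y<y_y$, then $w'\in T_{xy}$ is a closer visible vertex in $y$'s subcone, contradicting edge $xy$, while if $w'_y\geq y_y>v_y$, the sub-segment $pv\subseteq uv$ (with $y$-coordinates in $[p_y,v_y]\subseteq[x_y,w'_y]$, on which $p$ and $v$ lie on opposite sides of the line through $c'$) must properly cross $c'$, contradicting that $uv$ is a visibility edge. If instead some other constraint blocks the direct line-of-sight from $x$ to $v$, I would apply Lemma~\ref{lem:ConvexChain} to the triangle $xvp$ (its ``apex'' $p$ is not in $P$, so not an endpoint of any constraint), obtaining a convex chain of visibility edges from $x$ to $v$ inside this triangle; since the triangle lies within the common subcone of $v$ and $y$ in $C^x_0$, the first vertex of this chain is visible from $x$ in $y$'s subcone with strictly smaller projection, again contradicting edge $xy$.
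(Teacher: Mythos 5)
Your treatment of the case $v\in C^u_0$, $y\in C^x_0$ is correct and is in substance the paper's own argument: you show that one endpoint ($v$) lies strictly inside the other edge's canonical triangle and then contradict the closest-visible-vertex property, dealing with a subcone-splitting constraint at $x$ by a direct crossing argument and with blocked visibility by applying Lemma~\ref{lem:ConvexChain} to the triangle whose apex is the crossing point $p$. These are exactly the two subcases the paper distinguishes ($y\notin C^u_{0,j}$ and $y\in C^u_{0,j}$), with the roles of the two edges swapped.

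The gap is the reduction to that case. First, the premise behind it is mistaken: Lemma~\ref{lem:ContainsCorner} does not require the two edges to come from positive cones with the same index. Since the cone boundaries lie at multiples of $\pi/3$, every canonical triangle, whichever positive cone it belongs to, is a downward-pointing equilateral triangle with sides in the directions $0$, $\pi/3$ and $2\pi/3$; all canonical triangles are homothetic, so "corresponding boundaries are parallel" holds for any pair, and the paper applies Lemma~\ref{lem:ContainsCorner} uniformly to all orientations. Second, and more seriously, the mixed-index case cannot be "dispensed with by a direct geometric argument": an edge $uv$ with $v\in C^u_0$ (within $\pi/6$ of vertical) and an edge $xy$ with $y\in C^x_1$ (within $\pi/6$ of the direction $7\pi/6$) can perfectly well cross as line segments, so excluding such a crossing needs the same closest-visible-vertex contradiction as your main case, not a direction count. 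Your slope computation with $dx/dy$ bounded by $1/\sqrt3$ is specific to $C_0$ and does not transfer. To close the gap you must either invoke Lemma~\ref{lem:ContainsCorner} as the paper does, or redo the endpoint-containment step for each pair of cone indices; as written, the mixed-orientation crossings are simply not ruled out.
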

\begin{proof}
  We prove the lemma by contradiction. Assume that two edges $u v$ and $x y$ cross at a point $p$. Since the two edges are contained in their canonical triangles, these triangles must intersect. By Lemma~\ref{lem:ContainsCorner} we know that at least one of the corners of one triangle lies inside the other. We focus on the case where the upper right corner of \canon{x}{y} lies inside \canon{u}{v}. The other cases are analogous. Since $u v$ and $x y$ cross, this also means that either $x$ or $y$ must lie in \canon{u}{v}. In the remainder, we assume that $y \in \canon{u}{v}$. The arguments used for the case where $x \in \canon{u}{v}$ are analogous. 

  \begin{figure}[ht]
    \begin{center}
      \includegraphics{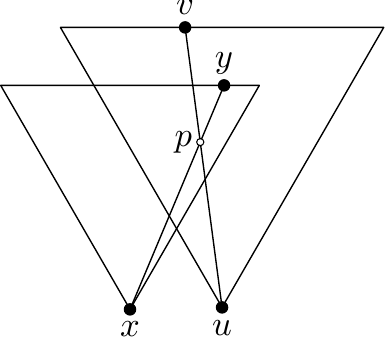}
    \end{center}
    \caption{Edges $u v$ and $x y$ intersect at point $p$}
    \label{fig:Planarity}
  \end{figure}  

  Assume without loss of generality that $v \in C^u_{0, j}$ (see Figure~\ref{fig:Planarity}). If $y \in C^u_{0, j}$, we look at triangle $u p y$. Since both $u$ and $y$ can see $p$, we get by Lemma~\ref{lem:ConvexChain} that either $u$ can see $y$ or $u p y$ contains a vertex. In both cases, $u$ can see a vertex in this subcone that is closer than $v$, contradicting the existence of the edge $u v$.

  If $y \notin C^u_{0, j}$, there exists a constraint $u z$ such that $v$ lies to one side of the line through $u z$ and $y$ lies on the other side. Since this constraint cannot cross $yp$, $z$ lies inside $u p y$ and is therefore closer to $u$ than $v$. Since by definition $z$ can see $u$, this also contradicts the existence of $u v$. 
\end{proof}

\section{Bounding the Maximum Degree}
In this section, we show how to construct a bounded degree subgraph \degreeNine of the constrained \graph that is a 6-spanner of the visibility graph. Given a vertex $u$ and one of its negative subcones, we define the \emph{canonical sequence} of this subcone as the vertices in this subcone that are neighbors of $u$ in the constrained \graph, in counterclockwise order (see Figure~\ref{fig:ConstructingDegree9}). These vertices all have $u$ as their closest visible vertex in a positive subcone. The \emph{canonical path} is defined by connecting consecutive vertices in the canonical sequence. This definition differs slightly from the one used by Bonichon \etal~\cite{BGHP10}.

\begin{figure}[ht]
  \begin{center}
    \includegraphics{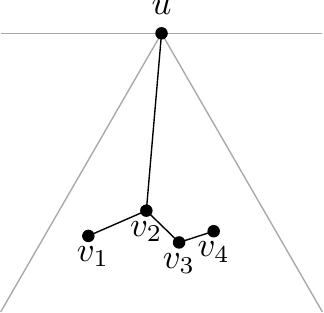}
  \end{center}
  \caption{The edges that are added to \degreeNine for a negative subcone of a vertex $u$ with canonical sequence $v_1, v_2, v_3$ and $v_4$}
  \label{fig:ConstructingDegree9}
\end{figure}

To construct \degreeNine, we start with a graph with vertex set $P$ and no edges. Then for each negative subcone of each vertex $u \in P$, we add the canonical path and an edge between $u$ and the closest vertex along this path, where distance is measured using the projections of the vertices onto the bisector of the cone containing the subcone. A given edge may be added by several vertices, but it appears only once in \degreeNine. This construction is similar to the construction of the unconstrained \mbox{degree-9} \graph described by Bonichon~\etal~\cite{BGHP10}. We proceed to prove that \degreeNine is a spanning subgraph of the constrained \graph with spanning ratio 3. 

\begin{lemma}\label{lem:Degree9IsSubgraph}
  \degreeNine is a subgraph of the constrained \graph.
\end{lemma}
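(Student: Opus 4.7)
The plan is to split the edges of $\degreeNine$ according to how they are inserted: for each negative subcone of each vertex $u$, the construction adds (i) a single edge from $u$ to the canonical-sequence vertex whose projection on the bisector is closest to $u$, and (ii) the canonical path edges $v_j v_{j+1}$ between consecutive vertices of the canonical sequence. Family (i) is immediate: by the very definition of the canonical sequence, every vertex in it is a neighbor of $u$ in the constrained \graph, so the chosen edge from $u$ to the closest canonical-sequence vertex is already present. It remains to verify family (ii), which is the substantive case and which adapts the approach of Bonichon~\etal~\cite{BGHP10} from the unconstrained setting.

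Fix consecutive canonical-sequence vertices $v_j$ and $v_{j+1}$ in a negative subcone of $u$, listed counterclockwise as seen from $u$, and let $C^{v_j}_i$ be the positive cone of $v_j$ containing $u$. I first identify, possibly after swapping the roles of $v_j$ and $v_{j+1}$, a positive subcone of $v_j$ that contains $v_{j+1}$: since both vertices lie in the same subcone of $u$ and $v_{j+1}$ is offset counterclockwise about $u$, the geometry forces $v_{j+1}$ into one of the positive cones of $v_j$ neighboring $C^{v_j}_i$ (with a small case split whose other branch makes $v_j$ sit in a positive cone of $v_{j+1}$ instead). Next I show that $v_j$ sees $v_{j+1}$: any constraint through $u$ with other endpoint in the shared subcone lies on a bounding line of that subcone, so none crosses the interior of triangle $u v_j v_{j+1}$; Lemma~\ref{lem:ConvexChain} then yields a convex visibility chain from $v_j$ to $v_{j+1}$, and any intermediate vertex on this chain would be visible from $u$ and, by consecutiveness of $v_j, v_{j+1}$ in the canonical sequence, would itself belong to the sequence strictly between them, a contradiction.

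It remains to verify that $v_{j+1}$ is the closest visible vertex to $v_j$ in the identified positive subcone. Assume for contradiction that some vertex $w$ visible from $v_j$ projects strictly closer than $v_{j+1}$ on the bisector of that cone. I intend a case analysis on the location of $w$: if $w$ lies inside the canonical triangle $\canon{v_{j+1}}{u}$, then $w$ would be visible from $v_{j+1}$ and closer than $u$ in $v_{j+1}$'s positive subcone, contradicting $u$ being $v_{j+1}$'s closest visible vertex; otherwise $w$ lies in the shared subcone of $u$ strictly between $v_j$ and $v_{j+1}$, and another application of Lemma~\ref{lem:ConvexChain} in an appropriate subtriangle makes $w$ visible from $u$ as well, so $w$ belongs to the canonical sequence between $v_j$ and $v_{j+1}$, again contradicting consecutiveness. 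The main obstacle is this closest-vertex step: constraints through $u$ and $v_j$ can further subdivide the positive subcones, so the case split has to respect all induced subcone boundaries and handle both possibilities for which of $v_j, v_{j+1}$ contains the other in a positive subcone.
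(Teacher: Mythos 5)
Your overall decomposition matches the paper's: the edges from $u$ to the closest canonical vertex are trivially present, and the work is in showing that consecutive canonical-sequence vertices $v$ and $w$ are adjacent in the constrained \graph, via Lemma~\ref{lem:ConvexChain} applied to the visibility edges $vu$ and $wu$. However, there is a genuine gap that you use twice, and it is exactly the point where the paper has to do real work. You argue that an intermediate vertex $x_i$ on the convex chain (and, later, a hypothetical vertex closer to $v_j$ than $v_{j+1}$) is visible from $u$ and lies in the relevant subcone of $u$, and you conclude that it ``would itself belong to the sequence strictly between them.'' That inference is false: the canonical sequence consists of the \emph{neighbors of $u$ in the constrained \graph} within that subcone, i.e., vertices $x$ for which $u$ is the \emph{closest} visible vertex in the positive subcone of $x$ containing $u$. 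A vertex can perfectly well lie in the subcone and see $u$ without being adjacent to $u$, namely when some other vertex $y$ is closer to it than $u$ in that positive subcone. Mere visibility from $u$ does not place it on the canonical path, so no contradiction with consecutiveness follows.

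The paper closes this gap with planarity (Lemma~\ref{lem:Plane}), which your proposal never invokes: if such a vertex $x_i$ is not a neighbor of $u$, then it has a closest visible vertex $y \neq u$ in the subcone of $C_0^{x_i}$ containing $u$, so $x_i y$ is an edge of the constrained \graph; since the polygon bounded by $vu$, $uw$ and the chain is empty, $y$ lies outside it and $x_i y$ must properly cross $vu$ or $wu$, contradicting planarity. The same device is needed in your final step (ruling out a vertex $x$ closer to $v_j$ than $v_{j+1}$ in the identified positive subcone), where the paper additionally uses planarity to exclude the region of $\canon{v_j}{v_{j+1}}$ beyond the edge $wu$ before running the argument again. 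Without some substitute for this planarity argument, both of your contradictions evaporate, so the proof as written does not go through. Your first case in the closest-vertex step (a vertex inside $\canon{v_{j+1}}{u}$ contradicting that $u$ is closest to $v_{j+1}$) is a reasonable ingredient, but it cannot carry the whole burden.
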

\begin{proof}
  Given a vertex $u$, we look at one of its negative subcones, say $\overline{C}^u_{0, j}$. The edges added to \degreeNine for this subcone can be divided into two types: edges of the canonical path, and the edge between $u$ and the closest vertex along the canonical path. Since every vertex along the canonical path is by definition connected to $u$ in the constrained \graph, it remains to show that the edges of the canonical path are part of the constrained \graph.

  Let $v$ and $w$ be two consecutive vertices in the canonical path of $\overline{C}^u_{0, j}$, with $v$ before~$w$ in counterclockwise order. By applying Lemma~\ref{lem:ConvexChain} on the visibility edges $vu$ and $wu$, we get a convex chain $v = x_0, x_1, \dots , x_{k-1}, x_k = w$ of $k \geq 1$ visibility edges, which together with $vu$ and $wu$ form a polygon $Q$ empty of vertices and constraints. 

  Since $Q$ is empty, $v$ is not the endpoint of a constraint lying between $vu$ and $vx_1$. Hence, $x_1$ cannot be in cone $C_0^v$, otherwise $x_1$ would be closer to $v$ than $u$ in the subcone of $v$ that contains $u$. Similarly, $x_{k-1}$ cannot lie in cone $C_0^w$. By convexity of the chain, this implies that no vertex on the chain can lie in cone $C_0$ of another vertex on the chain. Hence, since $Q$ is empty, all vertices $x_i$ can see $u$. 

  We first show that $k = 1$, i.e. that the chain is just the line $vw$. We prove this by contradiction, so assume that $k > 1$. Hence, there is at least one vertex $x_i$ with $0 < i < k$. As such a vertex is not part of the canonical path in $\overline{C}^u_{0, j}$, it must see a closest vertex~$y$ different from $u$ in the subcone of $C^{x_i}_0$ that contains $u$. As vertices on the chain cannot lie in $C_0$ of each other, $y$ cannot be a vertex on the chain. As $Q$ is empty, $y$ must therefore lie strictly outside of~$Q$, and $yx_i$ must properly intersect either $vu$ or $wu$. But this contradicts the planarity of the constrained \graph, as $yx_i$, $vu$, and $wu$ would all be edges of this graph. Hence, $k=1$ and the chain is a single visibility edge~$vw$.

  It remains to show that $v w$ is an edge of the constrained \graph. Assume without loss of generality that $w$ lies in $C_{2}^v$ (the case that $v$ lies in $C_{1}^w$ is similar). We need to show that $w$ is the closest visible vertex in subcone $C_{2, j}^v$. We prove this by contradiction, so assume another vertex $x$ in $C_{2, j}^v$ is the closest. Vertex~$x$ lies in $\canon{v}{w}$, which is partitioned into a part inside~$Q$, a part to the right of~$wu$, and a part below~$vw$ (see Figure~\ref{fig:Degree9IsSubgraph}). If $x$ lies to the right of~$wu$, we would have intersecting edges $vx$ and $wu$, contradicting planarity of the constrained \graph. As $Q$ is empty, $x$ must lie below~$vw$ (see Figure~\ref{fig:Degree9IsSubgraph}).

  \begin{figure}[ht]
    \begin{center}
      \includegraphics{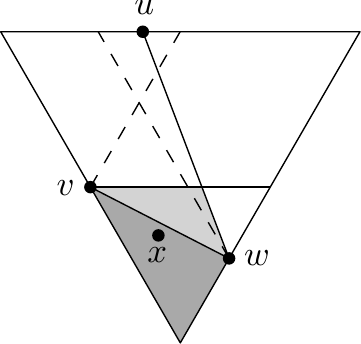}
    \end{center}
    \caption{$\canon{v}{w}$ is partitioned into a part inside~$Q$ (light gray), a part to the right of~$wu$ (white), and a part below~$vw$ (dark gray)}
    \label{fig:Degree9IsSubgraph}
  \end{figure}

  Applying Lemma~\ref{lem:ConvexChain} on the visibility edges $vx$ and $vw$, we get a convex chain $x = x_0, x_1, \dots , x_{k-1}, \\x_k = w$ of visibility edges and an empty polygon $R$. Vertex $x_1$ cannot lie in $C_{0}^x$, as this would contradict that $x$ is the closest visible vertex to $v$ in $C_{2, j}^v$. Hence, since $Q$ and $R$ are empty, $x$ can see $u$. Since $v$ and $w$ are two consecutive vertices in the canonical sequence of $\overline{C}^u_{0, j}$, $x$ is not part of this canonical sequence. So it must see a closest vertex~$y$ different from $u$ in the subcone of $C_{0}^x$ that contains $u$. Neither $v$ nor the convex chain from $x$ to $w$ lie in $C_{0}^x$. As $Q$ and $R$ are empty, $xy$ must properly intersect either $vu$ or $wu$, contradicting the planarity of the constrained \graph. 
\end{proof}

For future reference, we note that during the proof of Lemma~\ref{lem:Degree9IsSubgraph} the following two properties were shown.

\begin{corollary}
  \label{cor:NeighborsAlongCanonicalPath}
  Let $u$, $v$, and $w$ be three vertices such that $v$ and $w$ are neighbors along a canonical path of $u$ in $\overline{C}^u_{i}$. Vertex $w$ cannot lie in $C^v_{i}$ or $\overline{C}^v_{i}$. 
\end{corollary}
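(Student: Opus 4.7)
The plan is to observe that the corollary packages two facts already established inside the proof of Lemma~\ref{lem:Degree9IsSubgraph}; the only thing to do is to peel them out explicitly and invoke the rotational symmetry of the construction to reduce to the index $i=0$ treated in that proof.

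Specifically, I would assume without loss of generality that $i=0$ and that $v$ precedes $w$ counterclockwise in the canonical path of some subcone $\overline{C}^u_{0,j}$. Applying Lemma~\ref{lem:ConvexChain} to the triangle $vuw$ yields a convex chain $v=x_0,x_1,\dots,x_k=w$ of visibility edges bounding, together with $vu$ and $wu$, an empty polygon $Q$; the proof of Lemma~\ref{lem:Degree9IsSubgraph} has already shown that $k=1$, so $x_1=w$ and the chain is the single edge $vw$.

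For $w\notin C^v_0$, I would reuse verbatim the argument that ruled out $x_1\in C^v_0$ inside Lemma~\ref{lem:Degree9IsSubgraph}: because $Q$ contains neither vertices nor constraints, no constraint of $v$ can lie strictly between the directions $vu$ and $vw$ inside $C^v_0$ without crossing a visibility edge on the boundary of $Q$, so $w$ would have to lie in the same subcone of $C^v_0$ as $u$; but then $w$ would be a visible vertex closer to $v$ than $u$ in that subcone, contradicting the fact that $v$ belongs to the canonical sequence of $\overline{C}^u_{0,j}$. For $w\notin \overline{C}^v_0$, I would rewrite the claim via the identity $w\in\overline{C}^v_i\Leftrightarrow v\in C^w_i$ as $v\notin C^w_0$ and run the symmetric argument with the roles of $v$ and $w$ swapped, using that $w$ is also a member of the canonical sequence of $\overline{C}^u_{0,j}$.

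The main subtlety, and really the only one, is justifying why ruling out $x_1\in C^v_0$ covers the entire cone $C^v_0$ rather than just the subcone of it containing $u$; this is exactly the step where emptiness of $Q$ is used to forbid a separating constraint incident to $v$, and it is already carried out in the lemma's proof. Beyond this bookkeeping, no fresh geometric input is required; the corollary is simply a named reformulation of what has been proved.
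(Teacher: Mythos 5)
Your proposal is correct and matches the paper's treatment: the paper proves this corollary simply by observing that both exclusions ($x_1\notin C^v_0$ and $x_{k-1}\notin C^w_0$, applied to the single-edge chain $vw$ once $k=1$ is established) were already derived inside the proof of Lemma~\ref{lem:Degree9IsSubgraph}, which is exactly the argument you reconstruct, including the key use of the emptiness of $Q$ to rule out a separating constraint at $v$. Nothing further is needed.
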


\begin{corollary}
  \label{cor:TriangleOfCanonicalPathEmpty}
  Let $u$, $v$, and $w$ be three vertices such that $v$ and $w$ are neighbors along a canonical path of $u$ in $\overline{C}^u_{i}$. Triangle $u v w$ is empty and does not contain any constraints. 
\end{corollary}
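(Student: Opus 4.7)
The plan is to reuse the argument already embedded in the proof of Lemma~\ref{lem:Degree9IsSubgraph}. The starting move is to invoke Lemma~\ref{lem:ConvexChain} on the triple $(v, w, u)$: the edges $uv$ and $uw$ belong to the constrained \graph and are therefore visibility edges, and because $v$ and $w$ lie in the same subcone $\overline{C}^u_{i,j}$, no constraint incident to $u$ can cross the interior of triangle $uvw$ --- any such constraint would have split the subcone further, so $v$ and $w$ could not have ended up in a common subcone. The lemma then yields a convex chain $v = x_0, x_1, \dots, x_k = w$ of visibility edges, together with an enclosing polygon $Q$ bounded by this chain and by $uv$ and $uw$ that is empty of vertices and free of constraints.

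The second step is to show that $k = 1$, so that $Q$ coincides with triangle $uvw$. This is precisely the contradiction already carried out inside Lemma~\ref{lem:Degree9IsSubgraph}: if an intermediate chain vertex $x_i$ existed, then $x_i$ would not appear on the canonical path of $u$ in $\overline{C}^u_{i}$, so in the positive subcone of $x_i$ containing $u$ there would be a closest visible neighbor $y \neq u$. By convexity of the chain, no other chain vertex lies in $C^{x_i}_0$, so $y$ must lie strictly outside $Q$, which forces $yx_i$ to properly cross either $uv$ or $uw$; this contradicts the planarity of the constrained \graph established in Lemma~\ref{lem:Plane}.

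Once $k = 1$, the polygon $Q$ is exactly the triangle $uvw$, and the emptiness of $Q$ supplied by Lemma~\ref{lem:ConvexChain} immediately yields the claim. The only point requiring any care is the hypothesis check for Lemma~\ref{lem:ConvexChain} --- namely, that $u$ is not the endpoint of a constraint crossing the interior of $uvw$ --- and this is where the observation that $v$ and $w$ share a common subcone of $u$ does the real work. Everything else is a direct citation of the argument from Lemma~\ref{lem:Degree9IsSubgraph}, so I expect no genuine obstacle beyond recording which pieces of that proof are being reused.
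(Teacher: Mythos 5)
Your proposal is correct and follows essentially the same route as the paper, which obtains this corollary directly from the argument inside the proof of Lemma~\ref{lem:Degree9IsSubgraph} (apply Lemma~\ref{lem:ConvexChain} to $uv$ and $uw$, rule out intermediate chain vertices via planarity, conclude the empty polygon $Q$ is the triangle $uvw$ itself). Your explicit check that no constraint incident to $u$ can cross the interior of $uvw$ because $v$ and $w$ share a subcone is a detail the paper leaves implicit, and it is argued correctly.
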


\begin{theorem}
  \label{theo:UnconstrainedDegree9SpanningRatio}
  \degreeNine is a 3-spanner of the constrained \graph.
\end{theorem}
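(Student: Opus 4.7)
The plan is to exhibit, for every edge $uv$ of the constrained \graph, a path of length at most $3|uv|$ in \degreeNine. Assume without loss of generality that $v$ lies in a positive cone $C^u_i$ of $u$, so $u$ lies in the negative cone $\overline{C}^v_i$ of $v$, inside some negative subcone of $v$. Let $v_1, v_2, \dots, v_m$ be the canonical sequence of that subcone in counterclockwise order, let $u = v_\ell$, and let $v_k$ be the vertex of this sequence closest to $v$ along the bisector of $C^v_i$. By Lemma~\ref{lem:Degree9IsSubgraph} every edge $v_jv_{j+1}$ of the canonical path belongs to \degreeNine, and by construction so does $vv_k$. The candidate path from $u$ to $v$ in \degreeNine is therefore
\[
u = v_\ell \to v_{\ell\pm 1} \to \cdots \to v_k \to v,
\]
and the task reduces to bounding its total length by $3|uv|$.

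The structural ingredients I would use come from the two corollaries to Lemma~\ref{lem:Degree9IsSubgraph}. For every pair of consecutive $v_j, v_{j+1}$ in the canonical sequence, Corollary~\ref{cor:TriangleOfCanonicalPathEmpty} says that triangle $vv_jv_{j+1}$ is empty of vertices and free of constraints, while Corollary~\ref{cor:NeighborsAlongCanonicalPath} says that $v_{j+1} \notin C^{v_j}_i \cup \overline{C}^{v_j}_i$, leaving only four of the six cones available for each canonical-path edge. In addition, for each $v_j$ the canonical triangle $\canon{v_j}{v}$ is empty of vertices visible from $v_j$, because $v$ is $v_j$'s closest visible neighbor in the positive subcone that contains $v$.

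With these properties in hand, I would transport the projection argument of Bonichon~\etal~\cite{BGHP10} for the unconstrained degree-9 \graph to the constrained setting. Place $v$ at the origin with the bisector of $C^v_i$ along the positive vertical axis, so all $v_j$ lie in a downward $60^\circ$ wedge. Let $d_j$ denote the vertical distance from $v$ to $v_j$; the choice of $v_k$ gives $d_k \le d_j$ for every $j$, and the cone opening yields $|vv_k| \le 2d_k/\sqrt{3}$. Decomposing each canonical-path edge $v_jv_{j+1}$ into components along and perpendicular to the bisector, the cone exclusion of Corollary~\ref{cor:NeighborsAlongCanonicalPath} bounds the slope of such an edge, so the total length of the canonical subpath from $v_k$ to $v_\ell$ telescopes into a controlled combination of horizontal span and the vertical difference $d_\ell - d_k \le |vu|-d_k$. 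Adding back $|vv_k|$ then yields the desired inequality $|v_kv| + \sum_j |v_jv_{j+1}| \le 3|uv|$.

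The main obstacle is checking that the projection/telescoping argument survives the passage from the unconstrained to the constrained setting. A priori, constraints could hide vertices that would otherwise belong to the canonical sequence, causing the path to bend unpredictably or pushing $v_k$ further from $u$ than one expects. What rescues the argument is exactly the pair of corollaries above: the emptiness of the triangles $vv_jv_{j+1}$ rules out interference from other vertices or constraints between consecutive canonical-path vertices, while the cone exclusion keeps each canonical-path edge at a bounded slope, so the same trigonometric accounting as in~\cite{BGHP10} applies verbatim. The remaining bookkeeping is to treat the two subpaths on either side of $v_k$ symmetrically and to handle the trivial case $u = v_k$, where the path degenerates to the single edge $v_kv$ of length at most $|uv|$.
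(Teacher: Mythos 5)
Your proposal follows essentially the same route as the paper's proof: the same spanning path (the subpath of the canonical path from $u$ to the closest canonical vertex $v_k$, plus the edge $v_kv$), the same reliance on Corollaries~\ref{cor:NeighborsAlongCanonicalPath} and~\ref{cor:TriangleOfCanonicalPathEmpty} to make the projections of the canonical-path edges non-overlapping, and the same telescoping/trigonometric bound culminating in $\cos\alpha + 5\sin\alpha/\sqrt{3} \le 3$. The only difference is that you leave the final projection computation as a sketch (the paper carries it out via projections parallel to the cone boundaries onto the segments $wb$ and $m_0b$), but the plan is sound and matches the paper's argument.
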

\begin{proof} 
We prove the theorem by showing that for every edge $u w$ in the constrained \graph, where $w$ lies in a negative cone of $u$, \degreeNine contains a spanning path between $u$ and $w$ of length at most $3 \cdot |u w|$. This path will consist of a part of the canonical path in the subcone of $u$ that contains $w$ plus the edge between $u$ and the closest canonical vertex in that subcone.
      
  \begin{figure}[ht]
    \begin{center}
      \includegraphics{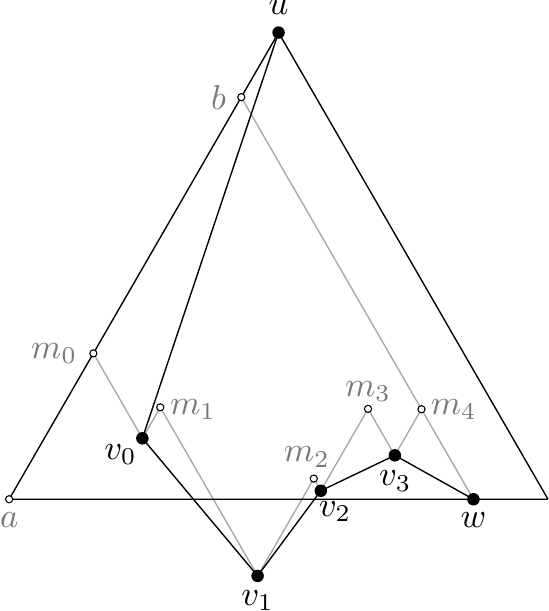}
    \end{center}
    \caption{Bounding the length of the canonical path}
    \label{fig:CanonicalPath}
  \end{figure}

  We assume without loss of generality that $w \in \overline{C}^u_0$. Let $v_0$ be the vertex closest to $u$ on the canonical path in the subcone $\overline{C}_{0, j}^u$ that contains $w$ and let $v_0, v_1, ..., v_k = w$ be the vertices along the canonical path from $v_0$ to $w$ (see Figure~\ref{fig:CanonicalPath}). Let $l_j$ and $r_j$ denote the rays defining the left and right boundaries of $C^{v_j}_0$ for $0 \leq j \leq k$ and let $r$ denote the ray defining the right boundary of $\overline{C}^u_0$ (as seen from $u$). Let $m_j$ be the intersection of $l_j$ and $r_{j-1}$, for $1 \leq j \leq k$, and let $m_0$ be the intersection of $l_0$ and $r$. Let $a$ be the intersection of $r$ and the horizontal line through $w$ and let $b$ be the intersection of $l_k$ and $r$. The length of the path between $u$ and $w$ in \degreeNine can now be bounded as follows:
  \begin{eqnarray*}
    d_{\degreeNine}(u, w) &\leq& |u v_0| + \sum_{j=1}^k |v_{j-1} v_j| \\ 
    &\leq& |u m_0| + |m_0 v_0| + \sum_{j=1}^k |m_j v_j| + \sum_{j=1}^{k} |v_{j-1} m_j| \\
    &=& |u m_0| + \sum_{j=0}^k |m_j v_j| + \sum_{j=1}^{k} |v_{j-1} m_j| 
  \end{eqnarray*}

  Since $u$ lies in $C_0$ of each of the vertices along the canonical path, all $m_j v_j$ project onto $w b$ and all $v_{j-1} m_j$ project onto $m_0 b$, when projecting along lines parallel to the boundaries of $\overline{C}^u_0$ instead of using orthogonal projections. By Corollary~\ref{cor:NeighborsAlongCanonicalPath} no edge on the canonical path can lie in $C_0$ of one of its endpoints, hence the projections of $m_j v_j$ onto $w b$ do not overlap. For the same reason, the projections of $v_{j-1} m_j$ onto $m_0 b$ do not overlap. Hence, we have that $\sum_{j=0}^k |m_j v_j| = |w b|$ and $\sum_{j=1}^{k} |v_{j-1} m_j| = |m_0 b|$. 
  \begin{eqnarray*}
    d_{\degreeNine}(u, w) &=& |u m_0| + \sum_{j=0}^k |m_j v_j| + \sum_{j=1}^{k} |v_{j-1} m_j| \\
    &=& |u m_0| + |w b| + |m_0 b| \\
    &\leq& |u a| + 2 \cdot |w a|
  \end{eqnarray*}

  Let $\alpha$ be $\angle a u w$. Using some basic trigonometry, we get $|u a| = |u w| \cdot \cos \alpha + |uw| \cdot \sin \alpha / \sqrt{3}$ and $|w a| = 2 \cdot |uw| \cdot \sin \alpha / \sqrt{3}$. Thus the spanning ratio can be expressed as:
  \begin{eqnarray*} 
    \frac{d_{\degreeNine}(u, w)}{|uw|} &\leq& \cos \alpha + 5 \cdot \frac{\sin \alpha}{\sqrt{3}} 
  \end{eqnarray*} 

  Since this is a non-decreasing function in $\alpha$ for $0 < \alpha \leq \pi/3$, its maximum value is obtained when $\alpha = \pi/3$, where the spanning ratio is 3. 
\end{proof} 

\noindent It follows from Theorems~\ref{theo:ConstrainedSpanningRatio} and \ref{theo:UnconstrainedDegree9SpanningRatio} that \degreeNine is a 6-spanner of the visibility graph. 

\begin{corollary}
  \degreeNine is a 6-spanner of the visibility graph.
\end{corollary}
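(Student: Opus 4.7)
The corollary follows by composing Theorems~\ref{theo:ConstrainedSpanningRatio} and \ref{theo:UnconstrainedDegree9SpanningRatio} in the standard way that spanning ratios compose multiplicatively. The plan is an edge-by-edge replacement argument applied twice.

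First, fix any two vertices $u$ and $v$ in $P$, and let $e_1, e_2, \dots, e_m$ be the visibility edges along a shortest $u$-to-$v$ path in $\Vis(P,S)$, whose total Euclidean length is $d_{\Vis(P,S)}(u,v)$ by definition. For each edge $e_i$, one endpoint lies in a positive cone of the other (possibly after swapping the roles of the endpoints, using the symmetry that $v \in C^u_i$ iff $u \in \overline{C}^v_i$). Hence the corollary to Theorem~\ref{theo:ConstrainedSpanningRatio} produces, for each $e_i$, a path in the constrained \graph between the endpoints of $e_i$ of length at most $2 \cdot |e_i|$. Concatenating these paths gives a $u$-to-$v$ walk in the constrained \graph of total length at most $2 \cdot d_{\Vis(P,S)}(u,v)$.

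Second, I would apply the same kind of replacement to this walk using Theorem~\ref{theo:UnconstrainedDegree9SpanningRatio}. Each edge of the walk is an edge of the constrained \graph, so (again after swapping endpoints if necessary, so that one endpoint lies in a negative cone of the other) Theorem~\ref{theo:UnconstrainedDegree9SpanningRatio} yields a path in \degreeNine between its endpoints whose length is at most three times its Euclidean length. Concatenating these replacements produces a $u$-to-$v$ walk in \degreeNine of total length at most $3 \cdot 2 \cdot d_{\Vis(P,S)}(u,v) = 6 \cdot d_{\Vis(P,S)}(u,v)$, which bounds $d_{\degreeNine}(u,v)$ from above by $6 \cdot d_{\Vis(P,S)}(u,v)$ as required.

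There is essentially no obstacle in this argument: the substantive spanning ratio bounds have already been established in Theorems~\ref{theo:ConstrainedSpanningRatio} and \ref{theo:UnconstrainedDegree9SpanningRatio}, and all that remains is bookkeeping to observe that replacing edges of a path by subpaths each of length at most $t$ times the replaced edge gives a total length at most $t$ times the original. The only minor point worth mentioning explicitly is the use of cone-symmetry so that the two theorems, each stated for edges oriented into a particular type of cone, can be applied to an arbitrary edge of the intermediate graphs.
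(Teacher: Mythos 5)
Your proposal is correct and matches the paper's (implicit) argument: the paper simply notes that the corollary follows from Theorems~\ref{theo:ConstrainedSpanningRatio} and \ref{theo:UnconstrainedDegree9SpanningRatio}, i.e.\ from composing the spanning ratios $2$ and $3$ multiplicatively, which is exactly the edge-by-edge replacement you spell out. The bookkeeping you provide, including the cone-symmetry remark, is just an explicit version of this standard composition.
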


To bound the degrees of the vertices, we use a charging scheme that charges the edges of a vertex to its cones. Summing the charge for all cones of a vertex then bounds its degree.

Recalling that the edges of \degreeNine are generated by canonical paths, consider a canonical path in $\overline{C}^u_{i, j}$, created by a vertex $u$. We use $v$ to indicate an arbitrary vertex along the canonical path, and we let $v'$ be the closest vertex to $u$ along the canonical path. The edges of \degreeNine generated by this canonical path are charged to cones as follows: 
\begin{itemize}
  \item The edge $u v'$ is charged to $\overline{C}^u_i$ and to $C^{v'}_i$. 
  \item An edge of the canonical path that lies in $\overline{C}^v_{i + 1}$ is charged to $C^v_i$. 
  \item An edge of the canonical path that lies in $\overline{C}^v_{i - 1}$ is charged to $C^v_i$. 
  \item An edge of the canonical path that lies in $C^v_{i + 1}$ is charged to $\overline{C}^v_{i - 1}$. 
  \item An edge of the canonical path that lies in $C^v_{i - 1}$ is charged to $\overline{C}^v_{i + 1}$. 
\end{itemize}
Essentially, the edge between $u$ and $v'$ is charged to the cones that contain it and edges along the canonical path are charged to the adjacent cone that is closer to the cone of $v$ that contains $u$. In other words, all charges are shifted one cone towards the positive cone containing $u$ (see Figure~\ref{fig:ChargingDegree9}). 

\begin{figure}[ht]
  \begin{center}
    \includegraphics{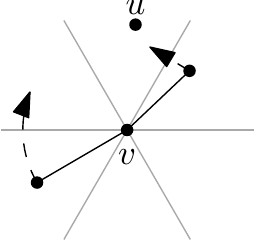}
  \end{center}
  \caption{Two edges of a canonical path and the associated charges}
  \label{fig:ChargingDegree9}
\end{figure}

By Corollary~\ref{cor:NeighborsAlongCanonicalPath}, no edge on the canonical path can lie in $C^v_{i}$ or $\overline{C}^v_{i}$, so the charging scheme above is exhaustive. Note that each edge is charged to both of its endpoints and therefore the charge on a vertex is an upper bound on its degree (only an upper bound, since an edge can be generated and charged by several canonical paths).

\begin{lemma}
\label{lem:BetweenConstraints}
  Let $v$ be a vertex that is incident to at least two constraints in the same positive cone $C^v_i$. Let $C^v_{i, j}$ be a subcone between two constraints and let $u$ be the closest visible vertex in this subcone. Let $\overline{C}^u_{i, k}$ be the subcone of $u$ that contains~$v$ and (when $uv$ is a constraint) intersects $C^v_{i, j}$. Then $v$ is the only vertex on the canonical path of $\overline{C}^u_{i, k}$.
\end{lemma}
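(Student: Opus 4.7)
The plan is a proof by contradiction using Corollaries~\ref{cor:NeighborsAlongCanonicalPath} and~\ref{cor:TriangleOfCanonicalPathEmpty}. Suppose the canonical path of $\overline{C}^u_{i,k}$ contains some vertex $v' \neq v$. Since the canonical path is connected, I may take $v'$ to be a neighbor of $v$ on it. Corollary~\ref{cor:NeighborsAlongCanonicalPath} then forces $v'$ to lie in one of the four cones of $v$ different from $C^v_i$ and $\overline{C}^v_i$, and Corollary~\ref{cor:TriangleOfCanonicalPathEmpty} guarantees that the triangle $uvv'$ is empty of vertices and that no constraint intersects its interior.

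The contradiction will come from the angular geometry at $v$. The direction $vu$ lies inside $C^v_i$ (since $u \in C^v_{i,j}$), and the two constraints bounding $C^v_{i,j}$ emanate from $v$ in $C^v_i$, one on each side of $vu$. The non-reflex angle at $v$ of triangle $uvv'$ sweeps from $vu$ to $vv'$; since $vv'$ leaves $v$ outside both $C^v_i$ and $\overline{C}^v_i$, this sweep must cross exactly one of the two bounding rays of $C^v_i$. The key observation is that the bounding constraint of $C^v_{i,j}$ on the crossed side lies strictly between $vu$ and that boundary ray, so its direction is contained in the interior angular region of the triangle at $v$. Hence an initial segment of this constraint lies inside the interior of triangle $uvv'$, contradicting Corollary~\ref{cor:TriangleOfCanonicalPathEmpty}.

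The main subtlety I expect is the case in which $uv$ itself is one of the constraints bounding $C^v_{i,j}$, since then that constraint lies on the boundary of triangle $uvv'$ rather than in its interior, and cannot be used to contradict emptiness. Here one has to invoke the specific definition of $\overline{C}^u_{i,k}$ as the subcone of $u$ that intersects $C^v_{i,j}$: the wedges $\overline{C}^u_{i,k}$ and $C^v_{i,j}$ lie in the same closed halfplane of the line through $uv$, so $v' \in \overline{C}^u_{i,k}$ falls on the same side of $vu$ as $C^v_{i,j}$, which is the side where the \emph{other} bounding constraint of $C^v_{i,j}$ lives. The sweep argument of the previous paragraph then applies using that other constraint, yielding the required contradiction.
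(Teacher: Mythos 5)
Your proof is correct, but it takes a genuinely different route from the paper's. The paper applies Lemma~\ref{lem:ConvexChain} directly to the two bounding constraints $v w_1$ and $v w_2$ of the subcone, obtaining a convex chain from $w_1$ to $w_2$ that closes off an empty, constraint-free polygon $Q \subset C^v_{i,j}$ with $u$ on the chain as the vertex closest to $v$; since $Q$ is empty and its two straight sides are constraints, no vertex of $\overline{C}^u_{i,k}$ other than $v$ can even see $u$, so none can be on the canonical path. You instead argue by contradiction one level up, at the canonical path itself: a hypothetical neighbour $v'$ of $v$ on the path yields, via Corollaries~\ref{cor:NeighborsAlongCanonicalPath} and~\ref{cor:TriangleOfCanonicalPathEmpty}, an empty and constraint-free triangle $u v v'$ whose interior angle at $v$ must sweep across exactly one boundary ray of $C^v_i$ and hence across the direction of the bounding constraint on that side, forcing an initial segment of that constraint into the triangle's interior --- a contradiction. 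Your treatment of the degenerate case where $uv$ is itself a bounding constraint, using the stipulation that $\overline{C}^u_{i,k}$ intersects $C^v_{i,j}$ to place $v'$ on the side of the \emph{other} constraint, is exactly where care is needed and is handled correctly; note also that $v' \neq w_1, w_2$ since those lie in $C^v_i$, so the relevant constraint really does enter the open interior rather than lying on the triangle's boundary. The trade-off: your argument leans only on the two corollaries (available at this point in the paper) and on the connectedness of the canonical path, so it is more local and self-contained; the paper's argument is shorter given Lemma~\ref{lem:ConvexChain} and establishes the slightly stronger fact that $v$ is the only vertex of $\overline{C}^u_{i,k}$ that sees $u$ at all.
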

\begin{proof}
  Let $v w_1$ and $v w_2$ be the two constraints between which subcone~$C^v_{i, j}$ lies. By applying Lemma~\ref{lem:ConvexChain} on these visibility edges, we get a convex chain $w_1 = x_0, x_1, \dots, x_k = w_2$ which together with $v w_1$ and $v w_2$ form a polygon $Q \subset C^v_{i, j}$ empty of vertices and constraints. Since $u$ is the closest vertex visible to~$v$ inside $C^v_{i, j}$, $u$ must be the vertex on this chain closest to~$v$. In particular, it is at least as close to~$v$ as $w_1$ and $w_2$. Since $v w_1$ and $v w_2$ are constraints and $Q$ is empty, there can be no vertex other than $v$ in $\overline{C}_{i, k}^u$ from which $u$ is visible. Hence, $v$ is the only vertex on the canonical path of $\overline{C}^u_{i, k}$. 
\end{proof}

\begin{lemma}
  \label{lem:ChargePositive}
  Each positive cone~$C_i$ of a vertex $v$ has a charge of at most $\max\{2, \ci{i}{v} + 1\}$, where \ci{i}{v} is the number of incident constraints in $C^v_i$.
\end{lemma}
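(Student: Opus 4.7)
The plan is to bound the charge on $C_i^v$ by summing contributions from every canonical path through $v$ lying in $\overline{C}_i^u$ for some $u \in C_i^v$. Since each such path corresponds bijectively to a positive subcone of $C_i^v$ in which $v$ has selected $u$ as its closest visible vertex, there are at most $c_i(v)+1$ such paths. For a given path, the charge to $C_i^v$ equals $1$ (from the edge $uv$) if $v$ is the closest vertex on the path, plus one unit for each neighbor of $v$ on the path lying in $\overline{C}_{i+1}^v$ or $\overline{C}_{i-1}^v$.

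First I would dispatch the \emph{middle} subcones of $C_i^v$, those bordered on both sides by constraints (which exist only when $c_i(v) \geq 2$, and there are $c_i(v)-1$ of them). For each, Lemma~\ref{lem:BetweenConstraints} gives that $v$ is the only vertex on the canonical path in the corresponding $\overline{C}_{i,k}^u$, so the contribution is exactly $1$ (from $uv$) with no neighbor charges. Next, for the remaining \emph{extreme} subcones (at most two of them, or the single subcone when $c_i(v) = 0$), I would use a projection argument: if $v$ is the closest vertex on the canonical path, then every neighbor of $v$ on the path has a strictly larger projection onto the bisector of $\overline{C}_i^u$, so each neighbor lies in the closed half-plane through $v$ opposite $u$ along the bisector; by Corollary~\ref{cor:NeighborsAlongCanonicalPath} this excludes $\overline{C}_i^v$, placing each neighbor in $C_{i+1}^v$ or $C_{i-1}^v$, which do not charge $C_i^v$. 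Thus the path contributes exactly $1$ in this case. If instead $v$ is not the closest, any neighbor of $v$ with smaller projection lies in $\overline{C}_{i+1}^v$ or $\overline{C}_{i-1}^v$ (by the same half-plane observation applied in reverse).

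The hard part will be showing that the combined contribution of the two extreme subcones is bounded by $2$, which is what yields the total of $(c_i(v) - 1) + 2 = c_i(v) + 1$ when $c_i(v) \geq 2$ and $\leq 2$ when $c_i(v) \leq 1$. I plan to argue this using the constraint that bounds each extreme subcone on its inner side: a closer-than-$v$ neighbor $w$ of $v$ on the left extreme subcone's canonical path that lies in $\overline{C}_{i+1}^v$ (the side ``beyond'' the bounding constraint) would force the chosen vertex $u_L$ of the left extreme subcone to see $w$ across the constraint, and a careful application of Lemma~\ref{lem:ConvexChain} together with the planarity of the constrained \graph and the structure of canonical paths rules out such a configuration. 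Symmetrically for the right extreme. The conclusion is that each of $\overline{C}_{i+1}^v$ and $\overline{C}_{i-1}^v$ can host at most one closer-neighbor contribution across the extreme paths, and each extreme path contributing the type~1 charge of $1$ must have $v$ closest (hence no type~2 contribution on that path). For the special case $c_i(v)=0$, the single canonical path directly contributes at most $2$ since $v$ has at most two neighbors on any path, yielding the claimed bound $\max\{2, c_i(v)+1\}$.
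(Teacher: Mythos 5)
Your proposal follows essentially the same route as the paper: the same subcone-by-subcone decomposition, Lemma~\ref{lem:BetweenConstraints} for the middle subcones, the projection argument showing that when $v$ is closest its path neighbours fall into positive cones (hence contribute no charge), and the constraint-blocking argument for the two extreme subcones. The one step you defer as ``the hard part'' is precisely where the paper's proof does its work: it shows the stronger statement that $v$ is the \emph{last} vertex of the canonical path in each extreme subcone, using that the chosen vertex $u$ is at least as close to $v$ as the constraint endpoint $x$, so that the constraint $xv$ blocks any would-be predecessor of $v$ from seeing $u$. Be aware that your sketched mechanism (``$u_L$ would have to see $w$ across the constraint'') does not cover the sub-case $u = x$, where the constraint is $uv$ itself and cannot block visibility to its own endpoint; the paper handles this separately by noting that the constraint then splits $\overline{C}^u_i$ into subcones, only one of which meets the extreme subcone of $v$, so $v$ is again the end of that canonical path.
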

\begin{proof}
  Let $u$ be a vertex such that $v$ is part of the canonical path of $u$. We first show that if this canonical path charges $C^v_i$, then $u$ must lie in $C^v_i$. Assume $u$ lies in $C^v_j$, $j \neq i$. Since all charges of this canonical path are shifted one cone towards $C^v_j$, a charge to $C^v_i$ would have to come from $\overline{C}^v_j$. However, by Corollary~\ref{cor:NeighborsAlongCanonicalPath}, no edge on the canonical path of a vertex in~$C^v_j$ can lie in $\overline{C}^v_j$.

  Next, we observe that there can be only one such vertex~$u$ for each subcone of $C^v_i$. This follows because $v$ is only part of canonical paths of vertices $u$ of which $u v$ is an edge in the constrained \graph, and there is at most one edge for each positive subcone.

  If $C^v_i$ is a single subcone and $v$ is not the closest vertex to $u$ on its canonical path, $C^v_i$ is charged for at most two edges along a single canonical path. Hence, its charge is at most 2. If $v$ is the closest vertex to $u$, the negative cones adjacent to this positive cone cannot contain any vertices of the canonical path. If they did, these vertices would be closer to $u$ than $v$ is, as distance is measured using the projection onto the bisector of the cone of $u$. Hence, if $v$ is the closest vertex to $u$, the positive cone containing $u$ is charged~1. Thus, when the positive cone is a single subcone, the cone is charged 2 if it has an edge of the canonical path in each adjacent negative cone, and at most 1 otherwise.

  Next, we look at the case where $C^v_i$ is not a single subcone. For each subcone, except the first and last, the canonical path of the vertex $u$ from that subcone consists only of $v$, by Lemma~\ref{lem:BetweenConstraints}. Hence, we get a charge of 1 per subcone and a charge of at most $\ci{i}{v} - 1$ in total for all subcones except the first and last subcone. We complete the proof by showing that the vertices~$u$ of the first and the last subcone can add a charge of at most 1 each.

  Consider the first subcone~$C^v_{i,0}$. The argument for the last subcone is symmetric. If $v$ is the closest vertex to $u$ on its canonical path, the negative cones adjacent to this positive cone cannot contain any vertices of the canonical path, since these would be closer to $u$ than $v$ is. Hence, the vertex~$u$ of this subcone adds a charge of 1. If $v$ is not the closest vertex to $u$, we argue that $v$ is the end of the canonical path of the vertex~$u$ of the subcone, implying that $u$ can add a charge of at most 1: Let $x$ be the other endpoint of the constraint that defines the subcone. Since $u$ is the closest visible vertex in this subcone of $v$, it cannot lie further from $v$ than $x$. If $u$ is $x$, constraint $u v$ splits $\overline{C}^u_i$ and only one of these two parts intersects the first subcone of $v$. Hence $v$ is the end of the canonical path of $u$. If $u$ is not $x$, $u$ lies closer to $v$ than $x$. Any vertex $y$ before~$v$ (in counterclockwise order) on the canonical path would have to lie in $C^v_{i+1}$ or $\overline{C}^v_{i-1}$, since by Corollary~\ref{cor:NeighborsAlongCanonicalPath}, $y$ cannot lie in $C^v_i$ or $\overline{C}^v_i$. Since $y$ must also lie in $\overline{C}^u_i$ to be on this canonical path, vertex~$u$ is not visible from $y$ due to the constraint~$x v$. Hence, no such vertex~$y$ can exist on the canonical path, implying that $v$ is the end of the canonical path.

  Summing up all charges, each positive cone is charged at most $\ci{i}{v} + 1$ if $\ci{i}{v} \geq 1$, and at most 2 otherwise. Hence, a positive cone is charged at most $\max\{2, \ci{i}{v} + 1\}$. 
\end{proof}

\begin{corollary}
  \label{cor:Charge2Constrained}
  If the $i$-th positive cone of a vertex $v$ has a charge of $\ci{i}{v} + 2$, then $\ci{i}{v} = 0$, i.e. it does not contain any constraints having $v$ as an endpoint in $C_i$ and is charged for two edges in the adjacent negative cones. 
\end{corollary}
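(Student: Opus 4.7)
The plan is to read the corollary as a direct byproduct of the case analysis already carried out in the proof of Lemma~\ref{lem:ChargePositive}, so the work reduces to matching the claimed value $\ci{i}{v}+2$ against the upper bound and the sub-cases established there.

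First I would invoke Lemma~\ref{lem:ChargePositive}, which says the charge on $C^v_i$ is at most $\max\{2,\ci{i}{v}+1\}$. If $\ci{i}{v} \geq 1$, then $\max\{2,\ci{i}{v}+1\} = \ci{i}{v}+1 < \ci{i}{v}+2$, and the hypothesis of the corollary fails. Hence $\ci{i}{v} = 0$, which by definition means no constraint incident to $v$ has its other endpoint in $C^v_i$, so $C^v_i$ is a single subcone. Moreover, the charge in question is exactly $2$.

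Next, I would revisit the single-subcone branch of the proof of Lemma~\ref{lem:ChargePositive}. There, at most one vertex $u$ can charge $C^v_i$ (since $v$ belongs to the canonical path of $u$ only when $uv$ is an edge of the constrained \graph, and a positive subcone contributes at most one such edge). If $v$ were the closest vertex to $u$ on its canonical path, the analysis in that proof shows the adjacent negative cones $\overline{C}^v_{i-1}$ and $\overline{C}^v_{i+1}$ cannot contain any canonical-path neighbors of $u$, so $C^v_i$ would be charged only $1$. Therefore $v$ is not the closest, and the charge of $2$ must come from two path-edges incident to $v$, one lying in each of $\overline{C}^v_{i-1}$ and $\overline{C}^v_{i+1}$, charged according to the shifting rule to $C^v_i$.

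No real obstacle arises: the corollary is essentially a bookkeeping extraction from the case distinction already performed, and the only subtlety is ruling out the case where $v$ is the closest vertex to $u$ on the canonical path, which is handled by the projection-distance argument recalled above.
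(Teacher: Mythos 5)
Your proposal is correct and follows the same route the paper intends: the corollary is stated as a direct extraction from the case analysis in the proof of Lemma~\ref{lem:ChargePositive}, and you correctly observe that the bound $\max\{2,\ci{i}{v}+1\}$ forces $\ci{i}{v}=0$ whenever the charge equals $\ci{i}{v}+2$, and that the single-subcone branch of that proof shows a charge of $2$ occurs only when $v$ is not the closest canonical vertex and an edge of the canonical path lies in each adjacent negative cone.
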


\begin{lemma}
  \label{lem:ChargeNegative}
  Each negative cone~$\overline{C}_i$ of a vertex $v$ has a charge of at most $\ci{\overline{i}}{v} + 1$, where \ci{\overline{i}}{v} is the number of incident constraints in $\overline{C}^v_i$.
\end{lemma}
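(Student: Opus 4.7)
\medskip\noindent\textbf{Proof plan.} The plan is to classify the contributions to the charge on $\overline{C}^v_i$ and bound each class separately. Two kinds of events can charge $\overline{C}^v_i$. \emph{Type A} arises from rule 1 when $v$ itself is the generator of a canonical path in some subcone of $\overline{C}^v_i$; the closest-vertex edge then charges $\overline{C}^v_i$, and since there is at most one canonical path per subcone, Type A contributes at most $\ci{\overline{i}}{v}+1$ in total. \emph{Type B} arises from rules 4 and 5 when $v$ is an interior vertex on a canonical path generated by some other vertex $u'$; solving for which choice of the rule's index sends the charge to $\overline{C}^v_i$, these are exactly the canonical paths with $u' \in C^v_{i+1}$ and the adjacent path vertex at $v$ lying in $C^v_{i-1}$ (\emph{Type B1}), and with $u' \in C^v_{i-1}$ and the adjacent path vertex in $C^v_{i+1}$ (\emph{Type B2}).

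The key geometric step is that any Type B charge forces $\ci{\overline{i}}{v}=0$. By Corollary~\ref{cor:TriangleOfCanonicalPathEmpty}, the triangle $u'vw$ is empty of vertices and contains no constraints. Since $u' \in C^v_{i+1}$ and $w \in C^v_{i-1}$, the interior of this triangle at $v$ spans the entire angular range of $\overline{C}^v_i$, so any constraint of $v$ whose other endpoint lies in $\overline{C}^v_i$ would have its initial segment inside this empty triangle, a contradiction. Hence when $\ci{\overline{i}}{v}\geq 1$ only Type A charges occur, and the bound $\ci{\overline{i}}{v}+1$ follows at once.

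It remains to handle $\ci{\overline{i}}{v}=0$, where I must show that Type A plus Type B is at most $1$. If a Type A charge exists, with closest vertex $v' \in \overline{C}^v_i$, then for any Type B triangle $u'vw$ the vertex $v'$ is either inside the empty triangle (impossible) or on the far side of the edge $u'w$; in the latter case the edge $vv'$ would properly cross $u'w$, contradicting planarity of the constrained \graph (Lemma~\ref{lem:Plane}). So Type A excludes Type B. If Type A is zero, I will bound each of Type B1 and Type B2 by one via an angular-exclusion argument: two distinct generators $u'_1, u'_2$ in the same cone $C^v_{i\pm 1}$ would require a constraint of $v$ separating them, but each triangle forbids constraints of $v$ in the angular wedge between its generator and the $\overline{C}^v_i$ boundary, and together these wedges cover the would-be separator. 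Finally, Type B1 and Type B2 cannot both occur: the same angular-exclusion argument rules it out in the generic case, and when the two triangles share their non-$v$ vertices $A \in C^v_{i+1}$ and $B \in C^v_{i-1}$, the conditions $B \in \overline{C}^A_{i+1}$ and $A \in \overline{C}^B_{i-1}$ force each of $A$ and $B$ to lie above the other, a contradiction. The main obstacle will be this final coexistence exclusion, where the angular-wedge and orientation arguments must be combined.
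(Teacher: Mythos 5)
Your proposal is correct and follows essentially the same route as the paper's proof: the same decomposition of charges (closest-vertex edges per subcone versus path-edge charges shifted in from the two adjacent positive cones), the same use of Corollary~\ref{cor:TriangleOfCanonicalPathEmpty} to get an empty, constraint-free triangle whose wedge at $v$ covers all of $\overline{C}^v_i$ (forcing $\ci{\overline{i}}{v}=0$ and excluding the closest-vertex edge via planarity), and the same cone-parity contradiction ($A$ cannot lie in both a positive and a negative cone of $B$) plus emptiness/planarity to show the adjacent positive cones contribute at most one charge in total.
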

\begin{proof}
  A negative cone of a vertex $v$ is charged by the edge to the closest vertex in each of its subcones and it is charged by the two adjacent positive cones if edges of canonical paths lie in those cones (see Figure~\ref{fig:ConstrainedChargingDegree9NegativeConeProof}). We first show that vertices that do not lie in the positive subcones directly adjacent to $\overline{C}^v_i$ cannot have an edge involving $v$ along their canonical paths. Let $u$ be a vertex that does not lie in a positive subcone directly adjacent to $\overline{C}^v_i$ and let $v x$ be the constraint closest to $\overline{C}^v_i$ that defines the boundary of the subcone of $v$ that contains $u$. For $u$ to have an edge along its canonical path that is charged to $\overline{C}^v_i$, it needs to lie further from $u$ than $x$, since otherwise no vertex creating such an edge is visible to $u$. However, this implies that $v$ would not connect to $u$, thus it would not part of the canonical path of $u$. 

  \begin{figure}[ht]
    \begin{center}
      \includegraphics{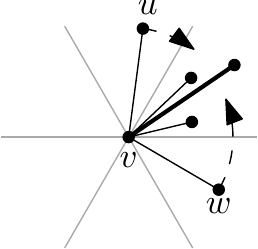}
    \end{center}
    \caption{If $v w$ is present, the negative cone does not contain edges having $v$ as endpoint}
    \label{fig:ConstrainedChargingDegree9NegativeConeProof}
  \end{figure}

  As $v$ can only be part of the canonical path of a single vertex in each of its positive subcones, we need to consider only the charges to $\overline{C}^v_i$ from the canonical path created by the closest visible vertices in the two positive subcones directly adjacent to $\overline{C}^v_i$. Let these vertices be u and w.

  Next, we show that every negative cone can be charged by at most one edge in total from its adjacent positive cones. Suppose that $w$ lies in a positive cone of $v$ and $v w$ is part of the canonical path of $u$. Then $w$ lies in a negative cone of $u$, which means that $u$ lies in a positive cone of $w$ and cannot be part of a canonical path for $w$. It remains to show that this negative cone of $v$ cannot be charged by an edge $v u'$ from a canonical path of a different vertex $w'$. Since $u v w$ forms a triangle in constrained \graph and this graph is planar, no edge of $u' v w'$ can cross any of the edges of $u v w$. This implies that either $u'$ and $w'$ lie inside $u v w$ or $u$ and $w$ lie inside $u' v w'$. However, by Corollary~\ref{cor:TriangleOfCanonicalPathEmpty}, triangles $x y z$ formed by a vertex $x$ and two vertices $y$ and $z$ that are neighbors along the canonical path of $x$ are empty. Therefore, $u'$ and $w'$ cannot lie inside $u v w$ and $u$ and $w$ cannot lie inside $u' v w'$. Thus every negative cone charged by at most one edge in total from its adjacent positive cones.

  Finally, we show that if one of $u v$ or $v w$ is present, the negative cone does not have an edge to the closest vertex in that cone and it contains no constraint that has $v$ as an endpoint. We first show that if one of $u v$ or $v w$ is present, the negative cone does not have an edge to the closest vertex in that cone. We assume without loss of generality that $v w$ is present, \mbox{$u \in C^v_i \cap C^w_i$}, and $w \in C^v_{i-1}$. Since $v$ and $w$ are neighbors on the canonical path of $u$, we know that the triangle $u v w$ is part of the constrained \graph and, by Corollary~\ref{cor:TriangleOfCanonicalPathEmpty}, this triangle is empty. Furthermore, since $u w$ is an edge of the constrained \graph and, by Lemma~\ref{lem:Plane}, the constrained \graph is plane, $v$ cannot have an edge to the closest vertex beyond~$u w$. Hence the negative cone does not have an edge to the closest vertex in that cone. By the same argument, the negative cone cannot contain a constraint that has $v$ as an endpoint. 

  It follows that if this negative cone contains no constraint that has $v$ as an endpoint, it is charged at most 1, by one of $u v$, $v w$, or the edge to the closest. Also, if this negative cone does contain constraints that have $v$ as an endpoint, it is not charged by edges in the adjacent positive cones and hence its charge is at most $\ci{\overline{i}}{v} + 1$, one for the closest in each of its subcones. 
\end{proof}

\begin{theorem}
  Every vertex $v$ in \degreeNine has degree at most $\c{v} + 9$. 
\end{theorem}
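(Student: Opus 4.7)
The plan is that this theorem is essentially a bookkeeping step that combines the two charging lemmas. First I would invoke the observation already made in the paper's description of the charging scheme: each edge of \degreeNine incident to $v$ is generated by some canonical path and falls into one of the five bullets of the scheme, so it is charged to at least one of the six cones at $v$. Consequently the degree of $v$ is bounded above by the total charge over the three positive and three negative cones at $v$.

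Second, I would apply Lemma~\ref{lem:ChargePositive} to each of the three positive cones $C^v_i$, which gives a charge of at most $\max\{2,\ci{i}{v}+1\}$, and observe that this quantity is always at most $\ci{i}{v}+2$ (the equality case $\ci{i}{v}=0$ being exactly the content of Corollary~\ref{cor:Charge2Constrained}). Independently, Lemma~\ref{lem:ChargeNegative} applied to each of the three negative cones $\overline{C}^v_i$ bounds their charges by $\ci{\overline{i}}{v}+1$.

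Third, I would simply sum the six bounds:
\[
\deg(v) \;\le\; \sum_{i=0}^{2} \bigl(\ci{i}{v}+2\bigr) \;+\; \sum_{i=0}^{2} \bigl(\ci{\overline{i}}{v}+1\bigr)
\;=\; \Bigl(\sum_{i=0}^{2}\ci{i}{v} + \sum_{i=0}^{2}\ci{\overline{i}}{v}\Bigr) + 6 + 3
\;=\; \c{v} + 9,
\]
using that $\c{v}$ is the total number of constraints incident to $v$ across all six cones.

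There is no real obstacle here, since the two charging lemmas have already absorbed all of the combinatorial and geometric work; the only thing one has to be careful about is the degenerate possibility that an edge $uv'$ created between $v'$ and the closest canonical-path vertex contributes to both endpoints' cones simultaneously, but this is accounted for by the fact that the scheme charges each edge to both endpoints, so each incident edge is reflected in at least one of the six cone-charges at $v$, which is all we need for an upper bound.
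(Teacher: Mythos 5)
Your proposal is correct and follows essentially the same route as the paper: both bound each positive cone's charge by $\ci{i}{v}+2$ via Lemma~\ref{lem:ChargePositive}, each negative cone's charge by $\ci{\overline{i}}{v}+1$ via Lemma~\ref{lem:ChargeNegative}, and sum over the six cones using that the $\ci{i}{v}$ and $\ci{\overline{i}}{v}$ total $\c{v}$. Your extra remark that the charge upper-bounds the degree because every edge is charged at both endpoints is exactly the observation the paper makes when introducing the charging scheme.
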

\begin{proof}
  From Lemmas~\ref{lem:ChargePositive} and \ref{lem:ChargeNegative}, each positive cone has charge at most $\ci{i}{v} + 2$ and each negative cone has charge at most $\ci{\overline{i}}{v} + 1$, where \ci{i}{v} and \ci{\overline{i}}{v} are the number of constraints in the $i$-th positive and negative cone. Since a vertex has three positive and three negative cones and the \ci{i}{v} and \ci{\overline{i}}{v} sum up to \c{v}, this implies that the total degree of a vertex is at most $\c{v} + 9$. 
\end{proof}

\section{Bounding the Maximum Degree Further}
In this section, we show how to reduce the maximum degree further, resulting in a plane 6-spanner~\degreeSix of the visibility graph in which the degree of any node~$v$ is bounded by $\c{v} + 6$.

By Lemmas~\ref{lem:ChargePositive} and \ref{lem:ChargeNegative} we see that if we can avoid the case where a positive cone gets a charge of $\ci{i}{v} + 2$, then every cone is charged at most $\ci{i}{v} + 1$, for a total charge of $\c{v} + 6$. By Corollary~\ref{cor:Charge2Constrained}, this case only happens when a positive cone has $\ci{i}{v} = 0$ and is charged for two edges in the adjacent negative cones. This situation is depicted in Figure~\ref{fig:PositiveConeCharge2}, where $x$, $v$, and $y$ are all on the canonical path of $u$. We construct \degreeSix by performing a transformation on \degreeNine for all positive cones in this situation.

\begin{figure}[ht]
  \begin{minipage}[t]{0.4\linewidth}
    \begin{center}
    \includegraphics{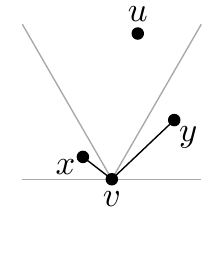}
  \end{center}
  \caption{A positive cone having charge 2}
  \label{fig:PositiveConeCharge2}
  \end{minipage}
  \hspace{0.05\linewidth}
  \begin{minipage}[t]{0.5\linewidth}
    \begin{center}
    \includegraphics{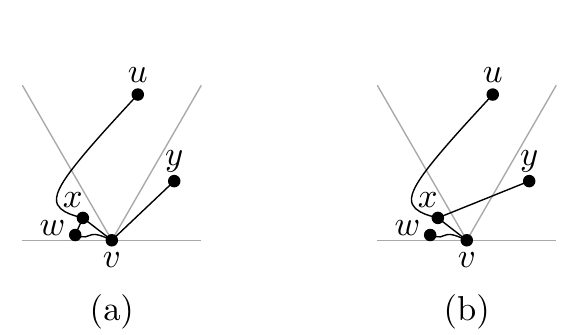}
  \end{center}
  \caption{Transforming \degreeNine (a) into \degreeSix (b)}
  \label{fig:ConstructingDegree6}
  \end{minipage}
\end{figure}

We now describe the transformation. We assume without loss of generality that the positive cone in question is $C^v_0$. We call a vertex $v$ the \emph{closest canonical vertex} in a negative subcone of $u$ when, among the vertices of the canonical path of $u$ in that subcone, $v$ is closest to $u$.

We first note that if $x$ is the closest canonical vertex in one of the at most two subcones of $\overline{C}^v_2$ 
that contain it, the edge $v x$ is charged to $C^v_0$, since $v x$ is an edge of the canonical path induced by $u$, and it is also charged to cone $\overline{C}^v_2$, since it is the closest canonical vertex in one of its subcones. Since we need to charge it only once to account for the degree of $v$, we can remove the charge to $C^v_0$, reducing its charge by 1 as desired. Similarly, if $y$ is the closest canonical vertex in one of the at most two subcones of $\overline{C}^v_1$
that contain it, it is charged to both $C^v_0$ and $\overline{C}^v_1$, so we can reduce the charge to $C^v_0$ by 1.
Therefore, we only perform a transformation if neither $x$ nor $y$ is the closest canonical vertex in the subcones of~$v$ that contain them. 

In that case, the transformation proceeds as follows. First, we add an edge between $x$ and $y$. Next, we look at the sequence of vertices between $v$ and the closest canonical vertex on the canonical path induced by $u$. If this sequence includes $x$, we remove $v y$. Otherwise we remove $v x$. Note that by Corollary~\ref{cor:TriangleOfCanonicalPathEmpty}, triangles $u x v$ and $u v y$ are empty and do not contain any constraints and therefore the edge $x y$ does not intersect any constraints. 

We assume without loss of generality that $v y$ is removed. By removing $v y$ and adding $x y$, we reduce the degree of $v$ at the cost of increasing the degree of $x$. Hence, we need to find a way to balance the degree of $x$. Since $x$ lies in $\overline{C}^v_2$ and the edge $x v$ is part of the constrained \graph, $x$ lies on a canonical path of $v$ in $\overline{C}^v_2$ and, since $x$ is not the closest canonical vertex to $v$ on this canonical path, $x$ has a neighbor $w$ along this canonical path. We claim that $x$ is the last vertex along the canonical path of $v$ in $\overline{C}^v_2$ and thus $w$ is uniquely defined. This follows because for any vertex $z$ later than~$x$ along that canonical path, either $z$ must lie in triangle $u v x$, contradicting its emptiness by Corollary~\ref{cor:TriangleOfCanonicalPathEmpty}, or the edges $zv$ and $xu$ of the constrained \graph must intersect, contradicting its planarity by Lemma~\ref{lem:Plane}. To balance the degree of $x$, we remove edge $x w$, if $w$ lies in $\overline{C}^x_0$ and $w$ is not the closest canonical vertex in a subcone of $\overline{C}^x_0$ 
that contains it. Otherwise $x w$ is not removed. The situation before the transformation is shown in Figure~\ref{fig:ConstructingDegree6}~(a) and the situation after the transformation is shown in Figure~\ref{fig:ConstructingDegree6}~(b). A curved line segment denotes a part of a canonical path plus the edge from its closest canonical vertex.

To construct \degreeSix, we apply this transformation on each positive cone matching the situation above. Note that since edge $uv$ is part of the constrained \graph, which is plane, and \degreeNine is a subgraph of the constrained \graph, the edges added by this transformation cannot be part of \degreeNine as they cross $uv$. Hence, since only edges of \degreeNine are removed, there are no conflicts among the transformations of different cones, i.e. no cone will add an edge that was removed by another cone and vice versa. Before we prove that this construction yields a graph of maximum degree $6+c$, we first show that the resulting graph is still a 3-spanner of the constrained \graph.

\begin{lemma}
  \label{lem:NotClosest}
Let $v x$ be an edge of \degreeNine and let $x$ lie in a negative cone $\overline{C}_i$ of $v$. If $x$ is not the closest canonical vertex in either of the at most two subcones of $\overline{C}^v_i$ that contain it, then the edge $v x$ is used by at most one canonical path. 
\end{lemma}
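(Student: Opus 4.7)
The plan is to suppose, for contradiction, that $vx$ is used by two distinct canonical paths, belonging to pairs $(z_1, j_1)$ and $(z_2, j_2)$ (meaning $v, x$ are consecutive in the canonical sequence of $\overline{C}^{z_\ell}_{j_\ell}$ for $\ell = 1, 2$). For any such pair $(z, j)$, both $v$ and $x$ lie in $\overline{C}^z_j$, so $z \in C^v_j \cap C^x_j$; and Corollary~\ref{cor:NeighborsAlongCanonicalPath} applied to the pair $v, x$ gives $x \notin C^v_j \cup \overline{C}^v_j$, which combined with $x \in \overline{C}^v_i$ forces $j \in \{i-1, i+1\}$. Only two directions are thus possible, and I must rule out more than one valid $(z, j)$ across them.

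For Case~A, where $j_1 = j_2 = j$: since the constrained \graph has at most one edge from $v$ per positive subcone of $C^v_j$, the distinct $z_1, z_2$ must lie in different subcones of $C^v_j$, separated by a constraint incident to $v$ in direction $C_j$. A short angular argument at $v$ shows that this constraint enters the interior of the triangle $vxz_\ell$ whose apex $z_\ell$ lies in the subcone of $C^v_j$ angularly farther from $\overline{C}^v_i$, contradicting Corollary~\ref{cor:TriangleOfCanonicalPathEmpty}. Hence at most one $z$ can yield a canonical path through $vx$ in any fixed direction $j$. For Case~B, where $j_1 \neq j_2$: WLOG $\{j_1, j_2\} = \{i-1, i+1\}$, so $z_1, z_2$ lie on opposite sides of line $vx$, since they belong to the two cones of $v$ angularly adjacent to $\overline{C}^v_i$. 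I then invoke the hypothesis: in the subcone of $\overline{C}^v_i$ containing $x$ there is a canonical vertex $x^*$ of $v$ strictly closer to $v$ than $x$; furthermore $x^* \notin C^x_i$, because otherwise $x^*$ would beat $v$ as closest vertex to $x$ in that subcone of $C^x_i$, contradicting that $vx$ is an edge of the constrained \graph.

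The main step---and the expected obstacle---is to show that $x^*$ must then lie in the interior of the empty triangle $vxz_\ell$ for whichever $\ell \in \{1, 2\}$ has $z_\ell$ on the same side of $vx$ as $x^*$, contradicting Corollary~\ref{cor:TriangleOfCanonicalPathEmpty}. This reduces to three half-plane checks: (i) $x^*$ is on $z_\ell$'s side of $vx$ by choice; (ii) $x^*$ is on $x$'s side of line $vz_\ell$ because $\overline{C}^v_i$ lies entirely in one half-plane of line $vz_\ell$ (the cones $C^v_{j_\ell}$ and $\overline{C}^v_i$ being angularly adjacent at $v$) and both $x, x^*$ are in $\overline{C}^v_i$; and (iii) $x^*$ is on $v$'s side of line $xz_\ell$, which follows from $v \in C^x_i$, $z_\ell \in C^x_{j_\ell}$, and the fact that $x^*$ lies in the negative cone of $x$ adjacent to $C^x_i$ on the $z_\ell$ side (a consequence of $x^* \in \overline{C}^v_i \setminus C^x_i$ being closer to $v$ than $x$). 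When $vx$ is a constraint the hypothesis supplies such an $x^*$ on each side of $vx$, ruling out both directions; otherwise the single $x^*$ rules out one direction. Combined with Case~A, at most one canonical path can use $vx$.
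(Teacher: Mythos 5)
Your argument is correct in substance, but it takes a more computational route than the paper's, and two sub-steps need patching. The paper argues by contraposition: assuming $vx$ lies on two canonical paths, Corollary~\ref{cor:NeighborsAlongCanonicalPath} forces the two generating vertices into $C^v_{i+1}\cap C^x_{i+1}$ and $C^v_{i-1}\cap C^x_{i-1}$ (they cannot share a region, by emptiness of the two triangles plus planarity), and then the two empty triangles flanking $vx$, together with Lemma~\ref{lem:Plane}, show that $vx$ is the \emph{only} edge of $v$ in $\overline{C}^v_i$, so $x$ is automatically the closest canonical vertex in every subcone containing it. Your Case~B reaches the same contradiction by explicitly producing the closer canonical vertex $x^*$ and placing it inside one of the empty triangles via three half-plane checks; this is the same geometric content (the flanking triangles cover all of $\overline{C}^v_i$ near $v$), just made quantitative. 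What the paper's phrasing buys is that it never has to reason about where $x^*$ sits relative to the cones of $x$, which is exactly where your proof gets delicate.

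Two caveats. First, your justification of $x^*\notin C^x_i$ is incomplete: $x^*$ would only ``beat'' $v$ as closest visible vertex to $x$ if it is actually visible from $x$ \emph{and} lies in the same subcone of $C^x_i$ as $v$, neither of which you establish. Fortunately the sub-claim is not load-bearing: $x^*\in\overline{C}^v_i$, $x^*$ closer to $v$ than $x$ in projection (hence above the horizontal line through $x$), and $x^*$ on $z_\ell$'s side of line $vx$ already imply all three half-plane conditions, including~(iii), so you should drop that step rather than repair it. Second, your Case~A argument fails in the sub-case where the constraint separating the two subcones is the segment $vz_1$ (or $vz_2$) itself, i.e., $z_1$ is a constraint endpoint lying in both subcones: then the constraint is an edge of the triangle $vxz_1$ and does not enter its interior, and it need not enter $vxz_2$ either. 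The uniform fix is the paper's: if $z_1$ and $z_2$ lie on the same side of $vx$, then either one apex lies inside the other (empty) triangle or two edges of the constrained \graph cross, contradicting Corollary~\ref{cor:TriangleOfCanonicalPathEmpty} or Lemma~\ref{lem:Plane}.
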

\begin{proof}
We prove the lemma by contraposition. Assume that edge $v x$ is part of two canonical paths of two vertices $u$ and $w$. For $v$ and $x$ to be neighbors on a canonical path of $u$ and $w$, these vertices need to lie in $C^v_{i+1} \cap C^x_{i+1}$ or $C^v_{i-1} \cap C^x_{i-1}$, by Corollary~\ref{cor:NeighborsAlongCanonicalPath}. By Corollary~\ref{cor:TriangleOfCanonicalPathEmpty} and planarity of the constrained \graph, $u$ and $w$ cannot lie in the same region, hence one lies in $C^v_{i+1} \cap C^x_{i+1}$ and one lies in $C^v_{i-1} \cap C^x_{i-1}$. We assume without loss of generality that $u \in C^v_{i+1} \cap C^x_{i+1}$ and $w \in C^v_{i-1} \cap C^x_{i-1}$ (see Figure~\ref{fig:NotClosest}). Thus $u v x$ and $w v x$ form two disjoint triangles in the constrained \graph and, by Corollary~\ref{cor:TriangleOfCanonicalPathEmpty}, both triangles are empty. Furthermore, since the constrained \graph is plane, no edge from $v$ can cross $u x$ or $w x$, making $v x$ the only edge of $v$ in $\overline{C}_i$. Therefore, $x$ is the closest canonical vertex in any subcone of $\overline{C}^v_i$ that contains it.

  \begin{figure}[ht]
    \begin{center}
    \includegraphics{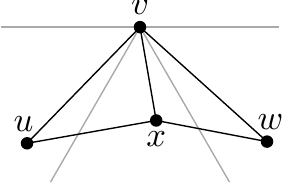}
  \end{center}
  \caption{If edge $v x$ is part of two canonical paths, $x$ is the only neighbor of $v$ in the negative cone of $v$}
  \label{fig:NotClosest}
  \end{figure}
\end{proof}

\begin{lemma}
   For every edge $u w$ in the constrained \graph, there exists a path in \degreeSix of length at most $3 \cdot |u w|$.
\end{lemma}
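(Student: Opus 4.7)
The plan is to follow the \degreeNine spanning path for $uw$ from Theorem~\ref{theo:UnconstrainedDegree9SpanningRatio} and reroute it through the skip edges introduced when constructing \degreeSix from \degreeNine, keeping the length at most $3\cdot\length{uw}$. Assuming $w\in\overline{C}^u_0$, let $v_0$ be the closest canonical vertex to $u$ in the subcone of $\overline{C}^u_0$ containing $w$, and let $v_0,v_1,\ldots,v_k=w$ be the canonical-sequence vertices from $v_0$ to $w$. Theorem~\ref{theo:UnconstrainedDegree9SpanningRatio} supplies the \degreeNine path $u\to v_0\to v_1\to\cdots\to v_k$ of length at most $3\cdot\length{uw}$, so the task is to exhibit a \degreeSix path of no greater length.

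The first step is to decide which edges of the \degreeNine path survive. The edge $uv_0$ does: any transformation whose $vy$, $vx$, or $xw_\star$ deletion could hit $uv_0$ would require either $u$ or $v_0$ to lie in a specific negative subcone of the other, but because $v_0$ is the closest canonical vertex on $u$'s path in the subcone containing $w$, the transformation's trigger ``neither $x$ nor $y$ is the closest canonical vertex in the subcones of $v$ that contain them'' fails and the transformation does not fire. For an interior vertex $v_j$, if the transformation at $(v_j,C^{v_j}_0)$ with source $u$ does fire, then its side-of-closest rule forces it to delete $v_jv_{j+1}$ and add the skip $v_{j-1}v_{j+1}$; its companion $xw_\star$ deletion removes an edge of $v_j$'s canonical path in $\overline{C}^{v_j}_2$ whose other endpoint $w_\star$ cannot coincide with $v_{j-2}$, since that would put $v_{j-2}$ in $\overline{C}^{v_{j-1}}_0$, forbidden for canonical-path neighbors by Corollary~\ref{cor:NeighborsAlongCanonicalPath}.

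The \degreeSix path is then built by concatenating surviving \degreeNine edges with a local detour in each maximal block $\{a,a+1,\ldots,b\}$ of indices at which the transformation fires. Inside such a block, all edges $v_av_{a+1},\ldots,v_bv_{b+1}$ are gone and the skips $v_{a-1}v_{a+1},v_av_{a+2},\ldots,v_{b-1}v_{b+1}$ are available. I route through the block by parity: if $b-a$ is even, take $v_{a-1}\to v_{a+1}\to v_{a+3}\to\cdots\to v_{b+1}$, using only skips; if $b-a$ is odd, take $v_{a-1}\to v_a\to v_{a+2}\to\cdots\to v_{b+1}$, starting with the preserved edge $v_{a-1}v_a$ and then alternating with skips. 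Each skip $v_{j-1}v_{j+1}$ closes the triangle $v_{j-1}v_jv_{j+1}$, so the triangle inequality gives $\length{v_{j-1}v_{j+1}}\le\length{v_{j-1}v_j}+\length{v_jv_{j+1}}$, and summing bounds the block's contribution by the matching portion of the \degreeNine path.

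The hard part is the bookkeeping: confirming that no skip edge and no preserved boundary edge $v_{a-1}v_a$ is itself cut by the $xw_\star$ side effect of a neighboring transformation, and that transformations just outside a block cannot damage its entry or exit. Both reduce to short cone-based arguments. An $xw_\star$ removal that hits the preserved $v_{a-1}v_a$ would force a canonical-path neighbor into a cone forbidden by Corollary~\ref{cor:NeighborsAlongCanonicalPath}; a skip edge, being added when passing from \degreeNine to \degreeSix rather than being a \degreeNine edge, cannot appear as any $xw_\star$, which only deletes \degreeNine canonical-path edges. Once these invariants are secured, the concatenation of surviving \degreeNine edges with the block detours is a valid \degreeSix path from $u$ to $w$ of length at most that of the \degreeNine path, hence at most $3\cdot\length{uw}$.
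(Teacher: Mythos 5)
Your strategy is the same as the paper's: keep the surviving edges of the \degreeNine spanning path and replace each removed canonical-path edge by the skip edge added in its place, bounding the detour by the triangle inequality. Two remarks on the machinery before the main issue. First, the parity routing through ``blocks'' of consecutive firing transformations is vacuous: a transformation centered at $v_j$ requires both $v_{j-1}$ and $v_{j+1}$ to lie in negative cones of $v_j$, so $v_j$ lies in a positive cone of $v_{j+1}$ and no transformation with $v_j$ as a neighbor can be centered at $v_{j+1}$. Every block therefore has size one; this is exactly the paper's observation that these removals are non-overlapping, and it is worth stating rather than engineering around. Second, your implicit assumption that the only removal able to delete $v_jv_{j+1}$ is the transformation at $v_j$ sourced by $u$ needs justification: a removed edge always has its outer endpoint failing to be a closest canonical vertex, so by Lemma~\ref{lem:NotClosest} it belongs to a unique canonical path, which is what forces the deleting transformation to live on $u$'s own path. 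You never invoke this lemma.

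The genuine gap is in your treatment of the companion ($xw_\star$) removals. You consider only those triggered by transformations centered at vertices $v_j$ of $u$'s canonical path, and you assert that a cone argument shows no such removal can hit a preserved edge of that path. This is false for the edge incident to the last vertex. If $u$ itself is the center of a transformation --- because $u$ lies on the canonical path of some apex $u'$ through a positive cone $C^u_i$ with $i\neq 0$ and has canonical-path neighbors in both adjacent negative cones --- then the kept edge of that configuration can be $u v_k$, where $v_k$ is the last vertex of $u$'s canonical path in $\overline{C}^u_0$, and the companion removal deletes precisely $v_{k-1}v_k$ when $v_{k-1}\in\overline{C}^{v_k}_i$ and is not a closest canonical vertex. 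Corollary~\ref{cor:NeighborsAlongCanonicalPath} only forbids $v_{k-1}$ from lying in $C^{v_k}_0$ or $\overline{C}^{v_k}_0$, so your cone argument does not exclude this (it works for the removals you do consider only because there the relevant positive cone is $C^{v_j}_0$, the one containing $u$). With $v_{k-1}v_k$ gone and no skip edge ending at $v_k$ on this path, your rerouting cannot reach $v_k$. The paper closes this case by noting that such a removal fires only when $uv_k$ is the kept, hence surviving, edge of the configuration centered at $u$, so the spanning path to $v_k$ may be replaced by the single edge $uv_k$, of length $|uv_k|\le 3\cdot|uv_k|$. You need to add this case explicitly.
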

\begin{proof}
  In the proof of Theorem~\ref{theo:UnconstrainedDegree9SpanningRatio} we showed that for every edge $u w$ in the constrained \graph, where $w$ lies in a negative cone of $u$, \degreeNine contains a spanning path between $u$ and $w$ of length at most $3 \cdot |u w|$, consisting of a part of the canonical path in the subcone of $u$ that contains $w$ plus the edge between $u$ and the closest canonical vertex in that subcone. We now show that \degreeSix also contains a spanning path between $u$ and $w$ of length at most $3 \cdot |u w|$. 

Note that in the construction, we never remove an edge $v x$ with $x$
being the closest canonical vertex in a negative subcone of $v$. This means two things: 1) For any spanning path in \degreeNine, its last edge still exists in \degreeSix. 2) By Lemma~\ref{lem:NotClosest}, any removed edge is part of a single canonical path, so we need to argue only about this single canonical path and the spanning paths using it.

During the construction of \degreeSix, two types of edges are removed: Type~1, represented by $v y$ in Figure~\ref{fig:ConstructingDegree6}, and Type~2, represented by $x w$ in Figure~\ref{fig:ConstructingDegree6}. We first show that no edge removal of either of these types removes edge $v x$ in Figure~\ref{fig:ConstructingDegree6}. A Type~1 removal that has $v$ as the middle vertex in the configuration, as shown in Figure~\ref{fig:ConstructingDegree6}, is called \emph{centered at} $v$. A Type~1 removal of $v y$ affects the single canonical path containing $x v$ and $v y$ (see Figure~\ref{fig:ConstructingDegree6}). 
  We note that no Type~1 removal involving $v$ can be centered at $x$ or $y$, since $v$ lies in a positive cone of both $x$ and $y$ and a Type~1 removal requires both neighbors of the center vertex to lie in negative cones. 
  This implies that Type~1 removals are non-overlapping (i.e. their configurations do not share edges) and, in particular, it implies that edge $v x$ is not removed by this type of removal. 

  A Type~2 removal of $x w$ affects the canonical path that contains $w$ and $x$ (see Figure~\ref{fig:ConstructingDegree6}). As argued during the construction of \degreeSix, $x$ is the last vertex along a canonical path of $v$ and the edge~$x w$ is removed if $w$ lies in a negative cone of $x$ and $w$ is not a closest canonical vertex to $x$. We now show that edge $v x$ cannot be removed by a Type 2 removal: For it to be removed, we need that either $x$ lies in a negative cone of $v$ and $v$ is the last vertex along this canonical path, or $v$ lies in a negative cone of $x$ and $x$ is the last vertex along this canonical path. However, since $v$ is not the last vertex along the canonical path that contains $v$ and $x$ (it is followed by $y$) and $v$ does not lie in a negative cone of $x$, neither condition is satisfied. 

  Now that we know that for every Type~1 removal, edge $v x$ is still present in the final \degreeSix, we look at the spanning paths in \degreeSix. Every spanning path present in \degreeNine can be affected by several non-overlapping Type~1 removals, as well as by a Type~2 removal at either end. By applying the triangle inequality to Figure~\ref{fig:ConstructingDegree6}, it follows that $|x y| \leq |x v| + |v y|$. Combined with the fact that for every Type~1 removal, $v x$ is present in \degreeSix, it follows that there still exists a spanning path between $u$ and any vertex $w$ along its canonical path, except possibly the last vertex $x$ on either end, as the edge connecting $x$ to its neighbor along the canonical path could be removed by a Type~2 removal. However, we perform a Type 2 removal only when $u$ and $x$ are part of a Type~1 configuration centered at $u$ and $u x$ is the edge of this configuration that was not removed (see Figure~\ref{fig:ConstructingDegree6}, where $v$ acts as the node called $u$ in the Type~2 argument above). Furthermore, we showed that in this case $u x$ is still present in \degreeSix. Hence, there exists a spanning path of length at most $3 \cdot |u w|$ between $u$ and any vertex $w$ along its canonical path. 

Thus, we have proven that for every edge $u w$ in the constrained \graph, where $w$ lies in a negative cone of $u$, also \degreeSix contains a spanning path between $u$ and $w$ of length at most $3 \cdot |u w|$. 
\end{proof}

\begin{lemma}
  Every vertex $v$ in \degreeSix has degree at most $\c{v} + 6$. 
\end{lemma}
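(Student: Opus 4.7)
The plan is to refine the charging scheme used in the proof for \degreeNine so that every positive cone of $v$ is charged at most $\ci{i}{v}+1$ rather than $\max\{2,\ci{i}{v}+1\}$, while every negative cone is still charged at most $\ci{\overline{i}}{v}+1$. Summing over the six cones of $v$ then yields $\c{v}+6$.

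By Corollary~\ref{cor:Charge2Constrained}, the only positive cones $C^v_i$ of $v$ that carry more than $\ci{i}{v}+1$ units of charge in \degreeNine are those with $\ci{i}{v}=0$ that are charged $2$ because they contribute one edge in each of the two adjacent negative cones. This is precisely the situation targeted by the construction of \degreeSix: there is a vertex $u\in C^v_i$ for which $v$ is an interior vertex of the canonical path in $\overline{C}^u_i$, with neighbours $x\in\overline{C}^v_{i-1}$ and $y\in\overline{C}^v_{i+1}$, and both edges $vx$ and $vy$ lie in \degreeNine.

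I would handle these configurations in two cases. If $x$ or $y$ is the closest canonical vertex in the subcone of $v$ containing it, no modification is performed, but the corresponding edge is already charged once to $C^v_i$ (for being part of $u$'s canonical path) and once to the adjacent negative cone (for being a closest-canonical edge). Since the edge contributes only once to $v$'s degree, the overcount can be removed, leaving $C^v_i$ with an effective charge of $1$. Otherwise the construction of \degreeSix deletes one of $vx,vy$, so the path-rule charge to $C^v_i$ drops by one, again giving a charge of~$1$.

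Finally, I would verify that transformations do not raise the charge on any negative cone of $v$. Transformations centered at $v$ only delete $v$-incident edges, so they can only lower $v$'s degree. A transformation centered at another vertex $v'$ can add an edge incident to $v$ only when $v$ plays the role of $x$ in $v'$'s configuration; in that event the balancing rule of the construction removes the neighbouring edge $vw$ on the canonical path of $v'$ unless $w$ is a closest canonical vertex in the containing subcone of $\overline{C}^v_0$. In the removal case $v$'s degree is unchanged; in the non-removal case the edge $vw$ was itself doubly charged in the \degreeNine scheme, so the slack absorbs the added edge. Planarity of the constrained \graph (Lemma~\ref{lem:Plane}) and emptiness of canonical triangles (Corollary~\ref{cor:TriangleOfCanonicalPathEmpty}) are needed to rule out any residual configurations with no available slack. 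The main obstacle is this last step: the local rules triggering or omitting the balancing removal must be matched precisely with the cone of $v$ that has the needed slack, and this matching relies on the geometric constraints imposed by the aforementioned planarity and emptiness results.
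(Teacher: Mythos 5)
Your overall strategy coincides with the paper's: identify, via Corollary~\ref{cor:Charge2Constrained}, the positive cones of charge $2$ with $\ci{i}{v}=0$, discharge them either by removing a double count (when $x$ or $y$ is a closest canonical vertex) or by invoking the edge swap of the \degreeSix construction, and then check that the swap does not overload any other vertex. The first two steps are handled correctly. The problem is the last step, which you yourself label ``the main obstacle'' and then leave unresolved: you never say to which cone of $x$ the new edge $xy$ is charged, nor why that cone has room for it. That is precisely where the real work lies, so as it stands the proof is incomplete.

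Concretely, the paper charges $xy$ at the $y$ end to $\overline{C}^y_{i-1}$ (replacing the charge of the deleted edge $vy$, so $y$'s total is unchanged) and at the $x$ end to the positive cone $C^x_{i-1}$ containing $v$. To see that $C^x_{i-1}$ can absorb this, one needs two facts you do not supply. First, $C^x_{i-1}$ has charge at most $1$ to begin with: the triangles $uvx$ and $vwx$ are empty and constraint-free by Corollary~\ref{cor:TriangleOfCanonicalPathEmpty}, so $x$ has no incident constraints in $C^x_{i-1}$, and since $x$ is the \emph{last} vertex on $v$'s canonical path, Lemma~\ref{lem:ChargePositive} and Corollary~\ref{cor:Charge2Constrained} cap the charge at $1$, coming only from the edge $xw$. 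Second, one needs a three-way case analysis on $w$ that your ``removal / non-removal'' dichotomy does not capture: (a) $w\in C^x_{i+1}$, in which case $xw$ is charged to $\overline{C}^x_i$ rather than $C^x_{i-1}$, so $C^x_{i-1}$ has charge $0$ and accepts $xy$ outright; (b) $w\in\overline{C}^x_i$ and $w$ is a closest canonical vertex, in which case $xw$ is doubly charged and its charge on $C^x_{i-1}$ can be replaced by $xy$; (c) $w\in\overline{C}^x_i$ and $w$ is not closest, in which case $xw$ is removed by the balancing rule and $xy$ inherits its charge. Your formulation ``removes $vw$ unless $w$ is a closest canonical vertex'' omits case (a) entirely (the balancing rule also declines to remove $xw$ when $w$ lies in a positive cone of $x$), and without pinning the charge of $xy$ to a specific cone of $x$ with identified slack, the degree bound for $x$ does not follow.
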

\begin{proof}
  To bound the degree, we look at the charges of the vertices. We prove that after the transformation each positive cone has charge at most $\ci{i}{v} + 1$ and each negative cone has charge at most $\ci{\overline{i}}{v} + 1$. This implies that the total degree of a vertex is at most $\c{v} + 6$. Since the charge of the negative cones is already at most $\ci{\overline{i}}{v} + 1$, we focus on positive cones having charge $\ci{i}{v} + 2$. By Corollary~\ref{cor:Charge2Constrained}, this means that these cones have charge 2 and $\ci{i}{v} = 0$. 

  Let $v$ be a vertex such that one of its positive cones $C^v_i$ has charge 2, let $u$ be the vertex whose canonical path charged 2 to $C^v_i$, and let $x \in \overline{C}^v_{i-1}$ and $y \in \overline{C}^v_{i+1}$ be the neighbors of $v$ on this canonical path (see Figure~\ref{fig:PositiveConeCharge2}). If $x$ or $y$ is the closest canonical vertex in a subcone of $\overline{C}^v_{i-1}$ or $\overline{C}^v_{i+1}$, this edge has been charged to both that negative cone and $C^v_i$. Hence we can remove the charge to $C^v_i$ while maintaining that the charge is an upper bound on the degree of $v$. 

  If neither $x$ nor $y$ is the closest canonical vertex in a subcone of $\overline{C}^v_{i-1}$ or $\overline{C}^v_{i+1}$, edge $x y$ is added. We assume without loss of generality that edge $v y$ is removed. Thus $v y$ need not be charged, decreasing the charge of $C^v_i$ to 1. Since $v y$ was charged to $\overline{C}^y_{i-1}$ and this charge is removed, we charge edge $x y$ to $\overline{C}^y_{i-1}$. Thus the charge of $y$ does not change. 

  It remains to show that we can charge $x y$ to $x$.
Recall that $x$ lies on the canonical path of $v$ in $\overline{C}^v_{i-1}$, is the last vertex on this canonical path, and has $w$ as neighbor on this canonical path (see Figure~\ref{fig:ConstructingDegree6}).
Since vertices $u v x$ and $v w x$ each form a triangle of neighboring vertices on a canonical path in the constrained \graph, by Corollary~\ref{cor:TriangleOfCanonicalPathEmpty} they are empty and do not contain any constraints. This implies that $x$ is not the endpoint of any constraint in $C^x_{i-1}$. 
Hence, since $x$ is the last vertex along the canonical path of $v$,
$C^x_{i-1}$ has charge at most 1 by Lemma~\ref{lem:ChargePositive} and Corollary~\ref{cor:Charge2Constrained}. Now, consider the neighbor~$w$ of $x$. Vertex $w$ can be in one of two cones with respect to $x$: $C^x_{i+1}$ and $\overline{C}^x_i$. If $w \in C^x_{i+1}$, $x w$ is charged to $\overline{C}^x_i$. Thus the charge of $C^x_{i-1}$ is 0 and we can charge $x y$ to it. 

  If $w \in \overline{C}^x_i$ and $w$ is the closest canonical vertex to $x$ in a subcone of $\overline{C}^x_i$, $x w$ has been charged to both $C^x_{i-1}$ and $\overline{C}^x_i$. We can remove that charge from $C^x_{i-1}$ and instead charge $x y$ to it, while keeping the charge of $C^x_{i-1}$ at 1. If $w \in \overline{C}^x_i$ and $w$ is not the closest canonical vertex to $x$ in a subcone of $\overline{C}^x_i$ 
  that contains it, $x w$ was removed during the transformation. Since this edge was charged to $C^x_{i-1}$, we can now charge $x y$ to $C^x_{i-1}$, while keeping the charge of $C^x_{i-1}$ at 1. 
\end{proof}

\begin{lemma}
  \degreeSix is a plane subgraph of the visibility graph.
\end{lemma}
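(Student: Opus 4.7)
The plan is to verify two things: that every edge of \degreeSix is a visibility edge, and that no two edges of \degreeSix cross. Edges of \degreeSix come from two sources, those inherited from \degreeNine, and the new edges $xy$ introduced by Type~1 transformations (Type~2 transformations only delete). Inherited edges are visibility edges since \degreeNine is a subgraph of the constrained \graph by Lemma~\ref{lem:Degree9IsSubgraph}. For a new edge $xy$, the construction already observes via Corollary~\ref{cor:TriangleOfCanonicalPathEmpty} that triangles $uxv$ and $uvy$ are empty of vertices and contain no constraints; since $xy$ lies inside their union, no constraint can properly intersect it, so $xy$ is a visibility edge.

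For planarity, inherited edges pairwise do not cross because \degreeNine is plane (Lemmas~\ref{lem:Degree9IsSubgraph} and \ref{lem:Plane}); removing edges in the transformation cannot create crossings. To show that a new edge $xy$ does not cross any inherited edge, I would consider the quadrilateral $Q = uxvy$ attached to the transformation. Its boundary edges $ux$, $xv$, $vy$, $yu$ all belong to the constrained \graph, and by Corollary~\ref{cor:TriangleOfCanonicalPathEmpty} the triangles $uxv$ and $uvy$ are empty, so they are two faces of this plane graph separated by the diagonal $uv$. Since $xy$ lies strictly inside $Q$, any inherited edge it could cross must also lie strictly inside $Q$, and planarity of the constrained \graph restricts this to the single candidate $uv$. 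But $uv \in \degreeNine$ only if $v$ is the closest canonical vertex to $u$ in its subcone, in which case the proof of Lemma~\ref{lem:ChargePositive} shows that $C^v_i$ can be charged at most $1$, contradicting Corollary~\ref{cor:Charge2Constrained}, which is precisely the trigger for this transformation. Hence $uv \notin \degreeNine$, so no inherited edge crosses $xy$.

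The hard part will be showing that two different added edges $x_1 y_1$ and $x_2 y_2$, coming from transformations at distinct pairs $(v_1, i_1)$ and $(v_2, i_2)$, cannot cross. If they crossed at a point $p$, then $p$ would lie in the strict interior of both quadrilaterals $Q_1$ and $Q_2$ and therefore in a single face of the constrained \graph; by the previous paragraph, this forces one of the two empty triangular faces of $Q_1$ to coincide as a region with one of the two empty triangular faces of $Q_2$, yielding an identification such as $\{u_1, x_1, v_1\} = \{u_2, x_2, v_2\}$. I would then run through the $3!$ possible role assignments for each such identification (and similarly for the three other face pairings) and derive a contradiction from the defining cone memberships $u_j \in C^{v_j}_{i_j}$, $x_j \in \overline{C}^{v_j}_{i_j - 1}$, $y_j \in \overline{C}^{v_j}_{i_j + 1}$, $v_j \in \overline{C}^{u_j}_{i_j}$, exploiting the fact that the positive and negative cones of matching index at a vertex are disjoint. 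Every role assignment that does not reduce to $(v_1, i_1) = (v_2, i_2)$ forces some vertex into both a positive and a negative cone of another vertex, which is impossible, so the quadrilateral interiors of distinct transformations are disjoint and the added edges pairwise do not cross.
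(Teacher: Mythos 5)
Your overall decomposition (inherited edges are fine; an added edge $xy$ avoids constraints because it lies in the union of the two empty, constraint-free triangles; the only inherited edge it could cross is the diagonal $uv$) matches the paper. The genuine gap is in your claim that ``$uv \in \degreeNine$ only if $v$ is the closest canonical vertex to $u$ in its subcone.'' That is not the only way $uv$ can enter \degreeNine: since $v$ lies on a canonical path of $u$, the segment $uv$ is an edge of the constrained \graph, and \degreeNine also contains all \emph{canonical-path} edges. So $uv$ could be in \degreeNine because $u$ and $v$ are consecutive on the canonical path of some third vertex $t$, which by Corollary~\ref{cor:NeighborsAlongCanonicalPath} would lie in $C^u_{i+1} \cap C^v_{i+1}$ or $C^u_{i-1} \cap C^v_{i-1}$. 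Your argument (via the charge of $C^v_i$) only excludes the closest-vertex case. The paper closes the missing case explicitly: if $t \in C^u_{i-1} \cap C^v_{i-1}$, the triangle $uvt$ contains all of $\overline{C}^u_{i} \cap \overline{C}^v_{i+1}$ and hence contains $y$, contradicting the emptiness of $uvt$ from Corollary~\ref{cor:TriangleOfCanonicalPathEmpty}; the symmetric case traps $x$. Without this step the conclusion $uv \notin \degreeNine$ is unproven.

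For crossings between two added edges you take a genuinely different route: a crossing point would lie in a common face of the constrained \graph, forcing two of the empty triangles to coincide, followed by a case analysis on role assignments. The triangles are indeed faces (empty and bounded by edges of a plane graph), so this can be made to work, but as written it is a sketch: you still need to handle a crossing point lying on a diagonal $u_jv_j$ (there the point is on a face boundary, not in a face interior) and to actually execute the role-assignment analysis. The paper avoids this combinatorics by reusing its first step: the quadrilateral $uxvy$ contains no vertices and its four sides are edges of the constrained \graph that no added edge can cross, so the only segment that could reach $xy$ is $uv$ itself, and $uv$, being an edge of the constrained \graph, cannot be an added edge. I would adopt that reduction, and in any case you must first repair the $uv \notin \degreeNine$ argument above, on which both parts of the planarity proof rest.
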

\begin{proof}
Since \degreeNine is a plane subgraph of the visibility graph by Lemmas~\ref{lem:Plane} and \ref{lem:Degree9IsSubgraph}, only the edges added in the transformation from \degreeNine to \degreeSix can violate the lemma. An added edge~$x y$ can potentially intersect edges of \degreeSix that are in the constrained \graph, other edges that were added (recall that added edges are not in the constrained \graph, so these two cases are disjoint), and constraints.

First, we consider intersections of $x y$ with edges of \degreeSix that are in the constrained \graph. Since $x y$ was added in the transformation, $x$, $y$, and $v$ are part of a canonical path of some vertex~$u$ (see Figure~\ref{fig:ConstructingDegree6}). Thus, in the constrained \graph $u v x$ and $u v y$ form two triangles, each containing a pair of neighboring vertices along the canonical path, which are empty by Corollary~\ref{cor:TriangleOfCanonicalPathEmpty}. Since the constrained \graph is plane and $x y$ lies inside $u x v y$, the only edge of the constrained \graph that can intersect $x y$ is $u v$. We now argue that $u v$ is not in~\degreeSix. By construction, $u v$ can only be part of \degreeNine if $v$ is the closest vertex to~$u$ on this canonical path, or if $u v$ are neighboring vertices along another canonical path of some vertex~$t$. The former cannot be the case, by the conditions for adding $x y$ in the transformation (see Figure~\ref{fig:ConstructingDegree6}). Assume the latter is the case. If $u \in C^v_i$, then either $t \in C^u_{i+1} \cap C^v_{i+1}$ or $t \in C^u_{i-1} \cap C^v_{i-1}$, by Corollary~\ref{cor:NeighborsAlongCanonicalPath}. If $t \in C^u_{i-1} \cap C^v_{i-1}$, the triangle $uvt$ contains all of $\overline{C}^u_{i} \cap \overline{C}^v_{i+1}$, which contains $y$, as shown in Figure~\ref{fig:DegreeSixPlaneProof}.

As $uvt$ is empty by Corollary~\ref{cor:TriangleOfCanonicalPathEmpty}, this is a contradiction. If $t \in C^u_{i+1} \cap C^v_{i+1}$, a similar contradiction based on~$x$ arises. This shows that $u v$ is not in~\degreeNine, and hence not in \degreeSix either, as edges added in the transformation from \degreeNine to \degreeSix are not in the constrained \graph.

  \begin{figure}[ht]
    \begin{center}
    \includegraphics{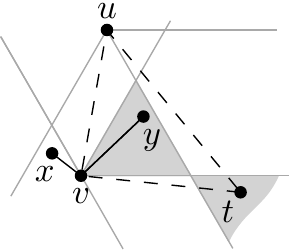}
  \end{center}
  \caption{If $t \in C^u_{i-1} \cap C^v_{i-1}$, the triangle $uvt$ contains all of $\overline{C}^u_{i} \cap \overline{C}^v_{i+1}$, which contains $y$}
  \label{fig:DegreeSixPlaneProof}
  \end{figure} 

Next, we consider intersections of $x y$ with other added edges. By Corollary~\ref{cor:TriangleOfCanonicalPathEmpty} the quadrilateral $u x v y$ does not contain any vertices. Its sides $u x$, $x v$, $v y$, and $y u$ are edges of the constrained \graph, which we showed above cannot be intersected by added edges. Hence, the only possibility for an added edge to intersect $x y$ is the edge $u v$. However, $u v$ cannot be an added edge, as we argued above. Thus, $x y$ cannot intersect an added edge.

  Finally, we consider intersection of $x y$ with constraints. By Corollary~\ref{cor:TriangleOfCanonicalPathEmpty}, triangles $u x v$ and $u v y$ are empty and do not contain any constraints. Hence, since edge $x y$ is contained in $u x v y$, it does not intersect any constraints. 
\end{proof}

\section{Conclusion}
We showed that the constrained \graph is a plane 2-spanner of $\Vis(P,S)$. We then generalized the construction of Bonichon~\etal~\cite{BGHP10} to show how to construct a plane 6-spanner of $\Vis(P,S)$ with maximum degree $6+c$, where $c=\max\{c(v)|v\in P\}$ and $c(v)$ is the number of constraints incident to a vertex $v$. 

A number of open problems still remain. For example, since constrained $\theta$-graphs with at least 6 cones were recently shown to be spanners~\cite{BR14}, a logical next question is to see if the method shown in this paper can be generalized to work for any constrained $\theta$-graph. It would also be interesting to see if the degree can be reduced further still, while remaining a spanner of $\Vis(P,S)$. 

Furthermore, it would be interesting to see if it is possible to reduce the maximum degree of the vertices further. This was recently shown to be possible in the unconstrained setting~\cite{bonichon2015there,kanj2017degree}, which raises the question whether the approaches used in the unconstrained setting work in the constrained setting as well. Since these two approaches use different graphs as a starting point and thus require different edge removal rules and shortcutting techniques, it could very well be the case that only one of them results in a plane graph that respects the constraints.

\bibliographystyle{plain} 
\bibliography{references}

\end{document}